
\documentclass{scrartcl}

\usepackage{cmap}
\usepackage[utf8]{inputenc}
\usepackage[T1]{fontenc}

\usepackage{amsmath}
\usepackage{amssymb}
\usepackage{amsthm}
\usepackage{thmtools}
\usepackage{bm}
\usepackage{mathrsfs}
\usepackage{csquotes}
\usepackage{tabularx}
\usepackage{enumitem}
\usepackage{hyperref}
\usepackage{algorithm}
\usepackage{algpseudocode}
\usepackage{bbm}
\usepackage{textcomp}
\usepackage{varwidth}
\usepackage{tikz}
\usetikzlibrary{automata, calc, matrix, shapes, positioning, decorations.pathreplacing, decorations.pathmorphing, fit}

\usepackage{algorithm}
\usepackage{listings}
\usepackage[lighttt]{lmodern}

\lstnewenvironment{pseudocode}[1][]
{
  \lstset{
    mathescape=true,
    escapeinside={(*@}{@*)},
    numbers=left,
    numberstyle=\tiny,
    basicstyle=\scriptsize\ttfamily,
    keywordstyle=\ttfamily\bfseries,
    keywords={,var, const, begin, end, return, procedure, function, if, then, else, fi, for, each, while, do, od, true, false, not, break, accept, reject, fail,}
    numbers=left,
    xleftmargin=.04\textwidth,
    #1
  }
}
{}

\usepackage{todonotes}

\setcapindent{0pt}

\theoremstyle{plain}
\newtheorem{thm}{Theorem}[section]
\newtheorem{lemma}[thm]{Lemma}
\newtheorem{prop}[thm]{Proposition}
\newtheorem{cor}[thm]{Corollary}
\newtheorem{fact}[thm]{Fact}

\theoremstyle{definition}

\newtheorem{prob}[thm]{Open Problem}

\theoremstyle{remark}
\newtheorem{remark}[thm]{Remark}

\newcommand{\inverse}[1]{\mkern 1.5mu\overline{\mkern-1.5mu#1\mkern-1.5mu}\mkern 1.5mu}

\newsavebox{\blankboxdisplay}
\savebox{\blankboxdisplay}{\hspace{0.1ex}\tikz[baseline=0.1em]{%
    \node [shape=rectangle, anchor=south, draw, solid, inner sep=0pt, minimum width=1ex, minimum height=0.9em] (char) {};}%
  \hspace{0.1ex}}
\newsavebox{\blankboxtext}
\savebox{\blankboxtext}{\hspace{0.1ex}\tikz[baseline=0.1em]{%
    \node [shape=rectangle, anchor=south, draw, solid, inner sep=0pt, minimum width=1ex, minimum height=0.9em] (char) {};}%
  \hspace{0.1ex}}
\newsavebox{\blankboxscript}
\savebox{\blankboxscript}{\scriptsize\hspace{0.1ex}\tikz[baseline=0.1em]{%
    \node [shape=rectangle, anchor=south, draw, solid, inner sep=0pt, minimum width=1ex, minimum height=0.9em] (char) {};}%
  \hspace{0.1ex}}
\newsavebox{\blankboxscriptscript}
\savebox{\blankboxscriptscript}{\tiny\hspace{0.1ex}\tikz[baseline=0.1em]{%
    \node [shape=rectangle, anchor=south, draw, solid, inner sep=0pt, minimum width=1ex, minimum height=0.9em] (char) {};}%
  \hspace{0.1ex}}
\newcommand{\blank}{\makeatletter%
  \ensuremath{\mathchoice%
    {\text{\usebox\blankboxdisplay}}%
    {\text{\usebox\blankboxtext}}%
    {\text{\usebox\blankboxscript}}%
    {\text{\usebox\blankboxscriptscript}}}%
  \makeatother}

\newlength{\edgelength}
\newcommand{\trans}[4]{%
  \begin{tikzpicture}[auto, shorten >=1pt, >=latex, baseline=(l.base), inner sep=0pt, outer xsep=0.3333em]
    \node (l) {\ensuremath{#1}};%
    \setlength{\edgelength}{\widthof{\scriptsize\ensuremath{#2/#3}}+0.5cm}%
    \node[base right=\edgelength of l] (r) {\ensuremath{#4}};%
    \path[->] (l.mid east) edge node[inner sep=0pt] {\scriptsize\ensuremath{#2/#3}} (r.mid west);%
  \end{tikzpicture}%
}

\newcommand{\transa}[3]{%
  \begin{tikzpicture}[auto, shorten >=1pt, >=latex, baseline=(l.base), inner sep=0pt, outer xsep=0.3333em]
    \node (l) {\ensuremath{#1}};%
    \setlength{\edgelength}{\widthof{\scriptsize\ensuremath{#2}}+0.5cm}%
    \node[base right=\edgelength of l] (r) {\ensuremath{#3}};%
    \path[->] (l.mid east) edge node[inner xsep=0pt, inner ysep=0.2em] {\scriptsize\ensuremath{#2}} (r.mid west);%
  \end{tikzpicture}%
}

\DeclareMathOperator{\id}{id}
\DeclareMathOperator{\Stab}{Stab}

\newcommand{\wt}{\widetilde}
\DeclareMathOperator{\idGrp}{1}

\newcommand{\problem}[3][]{%
  \par\vspace{0.125cm plus 0.05cm minus 0.05cm}\begin{tabularx}{\textwidth-2\parindent}{lX}%
    \if\relax\detokenize{#1}\relax%
    \else%
      \textnormal{\textbf{Constant:}}&#1\\%
    \fi%
    \textnormal{\textbf{Input:}}&#2\\%
    \textnormal{\textbf{Question:}}&#3\\%
  \end{tabularx}\vspace{0.125cm plus 0.05cm minus 0.05cm}\par%
  }

\usepackage[affil-it]{authblk}

\author{Daniele D'Angeli\thanks{The first author was supported by the Austrian Science Fund project FWF P29355-N35.}}
\affil{Università degli Studi Niccolò Cusano\\
  Via Don Carlo Gnocchi, 3\\
  00166 Roma, Italia}
\author{Emanuele Rodaro}
\affil{Department of Mathematics\\
  Politecnico di Milano\\
  Piazza Leonardo da Vinci, 32\\
  20133 Milano, Italy}
\author{Jan Philipp Wächter}
\affil{Institut für Formale Methoden der Informatik (FMI)\\
  Universität Stuttgart\\
  Universitätsstraße 38\\
  70569 Stuttgart, Germany}

\title{Orbit Expandability of Automaton Semigroups and Groups}

\begin{document}
  \maketitle

  \vspace{-2\baselineskip}
  \begin{abstract}
    We introduce the notion of \emph{expandability} in the context of automaton semigroups and groups: a word is \emph{$k$-expandable} if one can append a suffix to it such that the size of the orbit under the action of the automaton increases by at least $k$. This definition is motivated by the question which $\omega$-words admit infinite orbits: for such a word, every prefix is expandable.
    
    In this paper, we show that, given a word $u$, an automaton $\mathcal{T}$ and a number $k$ as input, it is decidable to check whether $u$ is $k$-expandable with respect to the action of $\mathcal{T}$. In fact, this can be done in exponential nondeterministic space. From this nondeterministic algorithm, we obtain a bound on the length of a potential orbit-increasing suffix $x$. Moreover, we investigate the situation if the automaton is invertible and generates a group. In this case, we give an algebraic characterization for the expandability of a word based on its \emph{shifted stabilizer}. We also give a more efficient algorithm to decide the expandability of a word in the case of automaton groups, which allows us to improve the upper bound on the maximal orbit-increasing suffix length. Then, we investigate the situation for reversible (and complete) automata and obtain that every word is expandable with respect to such automata. Finally, we give a lower bound example for the length of an orbit-increasing suffix.
    
    \noindent\textbf{Keywords:} automata; automaton group; automaton semigroup; expandability; growth; orbital graph
  \end{abstract}

  \begin{section}{Introduction}
    The class of groups generated by finite-state, letter-to-letter transducers -- the, so called, \emph{automaton groups} -- is famous as a rich source for examples of groups with interesting or even astonishing properties. Probably, the best known among these examples is Grigorchuk's group\footnote{See \cite{grigorchuk2008groups} for an accessible introduction to Grigorchuk's group.}, which is the historically first example of a group with sub-exponential but super-polynomial growth. Growth, in this context, refers to the growth of the function mapping a natural number $n$ to the number of elements reachable from the identity in the Cayley graph of the group in at most $n$ steps. With its intermediate growth, Grigorchuk's group answered the famous question by Milnor on the existence of such groups. This answer not only stirred interest in Grigorchuk's group but also in the overall class of automaton groups.
    
    Part of the study of automaton groups is to investigate their Schreier graphs. Since automaton groups act naturally on the words over some finite alphabet, there is a connection between Schreier graphs and orbital graphs of right-infinite words: the orbital graph of $\xi$ is isomorphic to the Schreier graph for the stabilizer of $\xi$. In fact, if the stabilizer is trivial, then the Schreier graph coincides with the Cayley graph. With this in mind, one motivation for the interest in Schreier or orbital graphs is that it is possible to generalize the notion of group growth\footnote{For both of these notions of growth, we refer the reader to Grigorchuk's survey \cite{grigorchuk2011milnor} for an introduction.}: how many elements are reachable in the orbital graph of $\xi$ in at most $n$ steps starting from $\xi$? In fact, considering orbital graphs instead of Schreier graphs allows to also consider the generalized setting of automaton \emph{semi}groups.
    
    In this paper, we are interested in a somewhat dual notion of orbital growth: if we start with a finite word $u$, how much longer do we have to make $u$ in order to obtain a larger orbit or what do we have to append to $u$ for this to happen? Formally, a finite word $u$ is said to be \emph{$k$-expandable} if there is some (finite) word $x$ such that the size of the orbit of $ux$ is larger than the size of the orbit of $u$ by at least $k$. Of course, behind this definition stands a more fundamental question: which (infinite) words admit an infinite orbit under the action of an automaton? Clearly, all prefixes of a word with an infinite orbit must be expandable! In fact, they are $k$-expandable for arbitrary $k$. On the other hand, if a word $u$ is $k$-expandable for some $k$, then there is some suffix $x$ such that $ux$ has a larger orbit. If we can choose $x$ in such a way that $ux$ remains expandable itself, then this yields a word with infinite orbit in the limit.
    
    The study of expandability is motivated by a question asked by the authors in \cite[Open Problem~4.3]{decidabilityPart}: does every infinite automaton semigroup admit a word with infinite orbit? In fact, it turns out that the answer to this question is \enquote{yes} \cite{orbitsPart}. However, the proof for this result is purely existential and does not give information on the structure of words with infinite orbit. On the other hand, it shows that deciding the finiteness problem for automaton semigroups and groups is equivalent to deciding whether there is a word with infinite orbit. While Gillibert proved that the finiteness problem is undecidable in the semigroup case \cite{Gilbert13}, it remains an important open problem in the group case.
    
    In the light of this open problem, results on the structure of words with infinite orbits seem highly interesting and the goal of this paper is to contribute to this study. Our discussion is centered around some of the most obvious questions related to the notion of expandability. Is it decidable to check whether a word is expandable? Can we characterize expandable words? Can we say something about the suffix which needs to be appended in order to increase the orbit size? Is the situation different with complete, invertible or reversible automata?
    
    Before giving answers to some of these questions, we introduce basic definitions in \autoref{sec:preliminaries}. Then, in \autoref{sec:decidable}, we state and analyze a nondeterministic space-bounded algorithm for the more general semigroup case to decide whether a given word $u$ is expandable. The nature of the algorithm will also allow us to give an upper bound on $|x|$ where $x$ is a suffix of minimal length such that the orbit of $ux$ is larger by at least $k$ than the orbit of $u$. In \autoref{sec:inverse}, we use the same approach to obtain a better upper bound in the case of automaton groups. First, we give an algebraic characterization of expandable words based on their stabilizer. Then, we use this characterization to obtain a more efficient nondeterministic algorithm yielding the better upper bound. Afterwards, in \autoref{sec:reversible}, we consider complete and reversible automata and see that every word is expandable with respect to these automata. Similarly to before, we obtain an even better upper bound on the length of the suffix. We conclude the paper with a lower bound construction for the length on the witness word in the group case in \autoref{sec:lowerBound}.
  \end{section}

  \begin{section}{Preliminaries}\label{sec:preliminaries}
    \paragraph{Words, Functions and Algebra.}
    Let $\Sigma$ be an \emph{alphabet}, i.\,e., a non-empty, finite set. Then, we use $\Sigma^*$ to denote the set of (finite) words over $\Sigma$ including the empty word, which we denote by $\varepsilon$. To denote the set of words excluding the empty one, we use $\Sigma^+$. Additionally, we use natural notations such as $\Sigma^n$ for the set of words $u$ with \emph{length} $|u| = n$ and $\Sigma^{< n}$ for the words of length smaller than $n$. A word $u$ is a \emph{prefix} of some other word $v$ if there is a word $x$ such that $v = ux$.
    
    To denote a \emph{partial} function $f$ from a set $A$ to a set $B$, we use the notation $f \colon A \to_p B$ (and omit the index $p$ if the function is total). For partial functions, we use the term one-to-one instead of injective: $f$ is \emph{one-to-one} if $f(a) = f(a')$ implies $a = a'$. For the disjoint union of two sets $A$ and $B$, we write $A \sqcup B$.
    
    We assume the reader to be familiar with basic algebraic notions such as semigroups and groups. We want to point out the difference between semigroup inverses and (group) inverses, however. An element $\inverse{s}$ of a semigroup $S$ is \emph{semigroup inverse} to another element $s \in S$ if $s \inverse{s} s = s$ and $\inverse{s} s \inverse{s} = \inverse{s}$ hold. Inverses in the usual group sense are, in particular, semigroup inverses but the converse is not true in general. A semigroup is an \emph{inverse} semigroup if every element has a \emph{unique} inverse.
    
    As a method to graphically depict the multiplication on a semigroup $S$ generated by a set $A$, we use the concept of a left Cayley Graph: its nodes are the elements of $S$ and it contains the edges $\transa{s}{a}{as}$ for every $s \in S$ and $a \in A$. Additionally, we use the \emph{eggbox} presentation for finite semigroups (see, e.\,g., \cite[p.~48]{howie}).
    
    \paragraph*{Decidability, $\mathcal{O}$-notation and Nondeterminism.}
    We only need basic ideas from the theory of decidability. However, we assume the reader to be somewhat familiar with $\mathcal{O}$-notation as we will make lax use of it. Some of our algorithms will be nondeterministic space-bounded algorithms, so we will need the concept of nondeterministic computations. As we often make use of \enquote{guess and check} algorithms, some familiarity with the concept will be beneficial. The only advanced method from complexity theory used in the paper is the inductive counting technique (famously applied to show the closure of nondeterministic space classes under complementation, the Immerman–Szelepcsényi theorem \cite{immerman1988nondeterministic, szelepcsenyi1988method}; see \cite[Theorem~7.6]{Pap94} for the theorem and the technique). However, understanding of this technique is only required to obtain a stronger space bound which is not vital for the other parts of the paper.
    
    \paragraph{Automata.}
    As it is common in the field of automaton semigroups and groups, we use the term automaton for what is more precisely referred to as a finite-state, letter-to-letter transducer without initial or final states whose input and output alphabets coincide. Formally, an \emph{automaton} is a triple $\mathcal{T} = (Q, \Sigma, \delta)$ where $Q$ is a finite set of \emph{states}, $\Sigma$ is an alphabet and $\delta \subseteq Q \times \Sigma \times \Sigma \times Q$ is a set of \emph{transitions}. As it is clearer, we use the notation $\trans{q}{a}{b}{p}$ instead of $(q, a, b, p)$ for transitions. Additionally, we use the common way of depicting automata:
    \begin{center}
      \begin{tikzpicture}[>=latex, auto,]
        \node[state] (q) {$q$};
        \node[state, right=of q] (p) {$p$};
        \draw[->] (q) edge node {$a / b$} (p);
      \end{tikzpicture}
    \end{center}
    indicates that the automaton contains states $q$ and $p$ and a transition $\trans{q}{a}{b}{p}$.
    
    An automaton $\mathcal{T} = (Q, \Sigma, \delta)$ is called \emph{deterministic} if
    \[
      \left| \left\{ \trans{q}{a}{b}{p} \in \delta \mid p \in Q, b \in \Sigma \right\} \right| \leq 1
    \]
    holds for every $q \in Q$ and every $a \in \Sigma$. Similarly, it is \emph{complete} if
    \[
      \left| \left\{ \trans{q}{a}{b}{p} \in \delta \mid p \in Q, b \in \Sigma \right\} \right| \geq 1
    \]
    holds. Finally, it is called \emph{reversible} if it is co-deterministic with respect to the input, i.\,e., every state $q \in Q$ has at most one in-going transition with input label $a$ for every $a \in \Sigma$:
    \[
      \left| \left\{ \trans{p}{a}{b}{q} \in \delta \mid p \in Q, b \in \Sigma \right\} \right| \leq 1 \text{.}
    \]
    
    A \emph{run} of the automaton $\mathcal{T} = (Q, \Sigma, \delta)$ is a sequence
    \begin{center}
      \begin{tikzpicture}[auto, >=latex]
        \node (q0) {$q_0$};
        \node[right=of q0] (q1) {$q_1$};
        \node[right=of q1] (dots) {$\dots$};
        \node[right=of dots] (qn) {$q_n$};
        \draw[->] (q0) edge node {\scriptsize$a_1 / b_1$} (q1)
                  (q1) edge node {\scriptsize$a_2 / b_2$} (dots)
                  (dots) edge node {\scriptsize$a_n / b_n$} (qn);
      \end{tikzpicture}
    \end{center}
    such that $\trans{q_{i - 1}}{a_i}{b_i}{q_i}$ is a transition in $\delta$ for every $1 \leq i \leq n$. The run is said to \emph{start} in $q_0$ and \emph{end} in $q_n$. Its input is $a_1 \dots a_n$ and its output is $b_1 \dots b_n$.
    
    \paragraph{Automaton Semigroups.}
    In a deterministic automaton $\mathcal{T} = (Q, \Sigma, \delta)$, every state $q \in Q$ induces a partial function $q \circ{}\! \colon \Sigma^* \to_p \Sigma^*$. This function maps a word $u = a_1 \dots a_n$ with $a_1, \dots, a_n \in \Sigma^*$ to the word $v = b_1 \dots b_n$ if the automaton admits a run starting in $q$ with input $u$ and output $v$. Although such a run does not necessarily exist (because the automaton may not be complete), it is unique if it does (this is due to determinism of the automaton), so the partial function is well-defined. Also note that the image of a word of length $n$ under $q \circ{}\!$ must be a word of length $n$ itself; all $q \circ{}\!$ are \emph{length-preserving}. We simply write $q_\ell \dots q_1 \circ{}\!$ for the composition $q_\ell \circ \dots q_1 \circ{}\!$ of the partial functions $q_1 \circ{}\!, \dots, q_\ell \circ{}\!$ with $q_1, \dots, q_\ell \in Q$. With this notation, the closure $\{ \bm{q} \circ{}\! \mid \bm{q} \in Q^+ \}$ of the functions induced by the states under composition of partial functions is a semigroup: the semigroup \emph{generated} by the automaton $\mathcal{T}$. To emphasize the fact that they generate semigroups, we use the name \emph{$S$-automaton} for a deterministic automaton from now on. A semigroup which is generated by some $S$-automaton is called an \emph{automaton semigroup}.
    
    Note that the functions $q \circ{}\!$ with $q \in Q$ and, more generally, the functions $\bm{q} \circ{}\!$ with $\bm{q} \in Q^+$ are total if the automaton is complete -- which we do not require in general!
    
    \begin{remark}
      This is different to many other papers on automaton semigroups (it does not make any difference for automaton groups as we will see shortly): they define automaton semigroups to be generated by \emph{complete}, deterministic automata. The advantage of defining automaton semigroups using partial automata is that this definition allows for a natural presentation of inverse semigroups as automaton semigroups. Additionally, since we mostly consider algorithms for automaton semigroup in this paper, it seems adequate to handle the more general case of semigroups generated by possibly non-complete automata.
      
      In fact, it is an open problem whether the class of semigroups generated by (partial/\nolinebreak{}possibly non-complete) $S$-automata coincides with the class of semigroups generated by complete $S$-automata. On the other hand, it is not too difficult to see that $S^0$, the semigroup arising from $S$ by adjoining a zero\footnote{A \emph{zero} of a semigroup $S$ is an element $0 \in S$ with $0 s = s 0 = 0$ for all $s \in S$.}, is generated by a complete $S$-automaton if $S$ is generated by a (possibly non-complete) $S$-automaton. More information on the connection between the two notions can be found in \cite{structurePart}.
    \end{remark}
    
    \paragraph{Dual Action.}
    The partial functions $\bm{q} \circ{}\! \colon \Sigma^* \to_p \Sigma^*$ belonging to an $S$-automaton $\mathcal{T} = (Q, \Sigma, \delta)$ represent a (partial) action of $Q^*$ on $\Sigma^*$. However, we also have an action of $\Sigma^*$ on $Q^*$, for which we define dual partial functions $\!{}\cdot u \colon Q^* \to_p Q^*$ for $u \in \Sigma^*$. On the states, the partial function $\!{}\cdot u$ maps $q \in Q$ to $q \cdot u = p \in Q$ if there is a run with input $u$ starting in $q$ and ending in $p$ in the automaton $\mathcal{T}$. Since the run must be unique if it exists, $\!{}\cdot u$ is well-defined on the states. To extend $\!{}\cdot u$ into a partial function $Q^* \to_p Q^*$, we define $\varepsilon \cdot u = \varepsilon$ and, inductively, $\bm{p} q \cdot u = \left(\bm{p} \cdot (q \circ u) \right) (q \cdot u)$ for $\bm{p} \in Q^+$ and $q \in Q$. An easy calculation shows $\bm{q} \cdot u \cdot v = \bm{q} \cdot uv$ for all $\bm{q} \in Q^*$ and $u, v \in \Sigma^*$.
    
    Some properties of the $S$-automaton $\mathcal{T}$ are reflected in the partial functions ${}\! \cdot u$. For example, similar to the partial functions $\bm{q} \circ{}\!$, we have that $\!{} \cdot u$ is total if the automaton is complete. If $\mathcal{T}$ is reversible, then the partial functions $\!{} \cdot u$ are one-to-one:
    \begin{fact}\label{fct:reversibleImpliesOneToOne}
      Let $\mathcal{T} = (Q, \Sigma, \delta)$ be a reversible $S$-automaton, then all partial functions $\!{}\cdot u$ with $u \in \Sigma^*$ are one-to-one.
    \end{fact}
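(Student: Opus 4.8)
The plan is to reduce the claim to the case of single letters and then argue by induction on word length. The key tool for the reduction is the composition property $\bm{q} \cdot u \cdot v = \bm{q} \cdot uv$ stated just above: writing $u = a_1 \cdots a_n$, it shows that $\cdot u$ is the composition of the single-letter partial functions $\cdot a_1, \dots, \cdot a_n$. Since a composition of one-to-one partial functions is again one-to-one, it suffices to prove that each $\cdot a$ with $a \in \Sigma$ is one-to-one on $Q^*$.

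First I would record the base observation on the level of states. If $q \cdot a = q' \cdot a = p$, then $\delta$ contains transitions $\trans{q}{a}{b}{p}$ and $\trans{q'}{a}{b'}{p}$, i.\,e.\ two in-going transitions of $p$ with input letter $a$. Reversibility forbids this unless the two transitions coincide, which forces $q = q'$. Hence $\cdot a$ is one-to-one on $Q$ for every $a \in \Sigma$.

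Next I would lift this to $Q^*$ by induction on the common length $m$ of the two argument words. Note that $\cdot a$ is length-preserving (as one sees directly from the recursive definition $\bm{p} q \cdot a = (\bm{p} \cdot (q \circ a))(q \cdot a)$, since $q \circ a$ is a single letter and $q \cdot a$ a single state), so equal images already force equal lengths and the induction is well set up. The statement to induct on is: for \emph{every} $c \in \Sigma$, the partial function $\cdot c$ is one-to-one on $Q^m$. In the inductive step, write the two words as $\bm{p} q$ and $\bm{p'} q'$ and expand both images via the recursive definition. Comparing the last letters of the two equal images gives $q \cdot c = q' \cdot c$, whence $q = q'$ by the state-level case; comparing the prefixes gives $\bm{p} \cdot (q \circ c) = \bm{p'} \cdot (q' \circ c)$.

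The main subtlety—and the reason the induction must range over all letters simultaneously rather than fixing one—is that after peeling off the last state the relevant input letter is no longer $c$ but $b := q \circ c = q' \circ c$. Because I have already identified $q = q'$, this $b$ is a single well-defined letter, so the prefix equation reads $\bm{p} \cdot b = \bm{p'} \cdot b$ with $\bm{p}, \bm{p'} \in Q^{m-1}$, and the induction hypothesis applied to the letter $b$ yields $\bm{p} = \bm{p'}$ and hence $\bm{p} q = \bm{p'} q'$. I expect the only remaining points needing care to be the partiality bookkeeping—one must observe that whenever the images $\bm{q} \cdot c$ and $\bm{q'} \cdot c$ are defined and equal, all the intermediate values appearing in the expansion are defined as well—and keeping the quantification over $c \in \Sigma$ inside the induction.
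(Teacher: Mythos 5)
Your proof is correct, and its inductive core coincides with the paper's: both induct on the length of the state sequence, peel off the last state via the recursive definition $\bm{p} q \cdot u = \left(\bm{p} \cdot (q \circ u)\right)(q \cdot u)$, use reversibility to identify the last states, and --- the point you rightly flag as the main subtlety --- apply the induction hypothesis to a \emph{different} input, so the quantification over inputs must stay inside the induction. The one structural difference is how the input word is handled. You first factor $\cdot\, u$ as a composition of the single-letter maps $\cdot\, a_1, \dots, \cdot\, a_n$ (via $\bm{q} \cdot u \cdot v = \bm{q} \cdot uv$) and then run the induction for letters only, applying the hypothesis to the letter $b = q \circ c$; the paper instead runs the induction for arbitrary words $u$ directly, which forces its base case to iterate the reversibility argument along runs (at most one run with input $u$ ends in a given state) and has the inductive step apply the hypothesis to the word $p \circ u$. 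Your reduction buys a base case in which reversibility appears only in its raw single-transition form, at the price of the composition-of-partial-functions bookkeeping you mention (the identity $\bm{q} \cdot u \cdot v = \bm{q} \cdot uv$ must be read as an equality of partial functions, domains included); the paper avoids that bookkeeping but re-proves state-level injectivity for whole words inside the base case. A minor point in your favour: you explicitly establish $q = q'$ \emph{before} invoking the induction hypothesis, which is logically necessary (otherwise the two prefixes would be images under possibly different letters), whereas the paper's write-up states the two conclusions in the less careful order. Both arguments are sound; the difference is packaging rather than substance.
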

    \begin{proof}
      We have to show that $\bm{p} \cdot u = \bm{q} \cdot u$ implies $\bm{p} = \bm{q}$ whenever $\!{}\cdot u$ is defined on $\bm{p} \in Q^+$ and $\bm{q} \in Q^+$ (since the case $\varepsilon \cdot u$ is trivial). We do this by induction on $|\bm{p}| = |\bm{q}|$. For a single state $q \in Q$ and a single letter $a \in \Sigma$, we have that $q$ has at most one in-going transition with input $a$ by the definition of reversibility. Iterating this argument, we also see that there is at most one run with input $u$ ending in $q$. This shows $p \cdot u = q \cdot u \implies p = q$ for $p, q \in Q$. If we have $\bm{p} p \cdot u = \bm{q} q \cdot u$ for $\bm{p}, \bm{q} \in Q^+$ and $p, q \in Q$, we have
      \[
        \left( \bm{p} \cdot (p \circ u) \right) (p \cdot u) = \bm{p} p \cdot u = \bm{q} q \cdot u = \left( \bm{q} \cdot (q \circ u) \right) (q \cdot u) \text{,}
      \]
      which implies $\bm{p} \cdot (p \circ u) = \bm{q} \cdot (q \circ u)$ and $p \cdot u = q \cdot u$. By induction, we get $\bm{p} = \bm{q}$ and also $p = q$.
    \end{proof}

    \paragraph{Automaton Constructions.}
    For two automata $\mathcal{T}_1 = (Q_1, \Sigma_1, \delta_1)$ and $\mathcal{T}_2 = (Q_2, \Sigma_2, \delta_2)$, we can make their \emph{disjoint union} $\mathcal{T}_1 \sqcup \mathcal{T}_2 = (Q_1 \sqcup Q_2, \Sigma_1 \cup \Sigma_2, \delta_1 \sqcup \delta_2)$. An important observation here is that the disjoint union of two $S$-automata is an $S$-automaton again; similarly, the disjoint union of two complete automata over the same alphabet remains complete.
    
    More interesting than the disjoint union of automata is the construction of inverse automata. Let $\mathcal{T} = (Q, \Sigma, \delta)$ be an automaton. Then, its \emph{inverse} automaton is $\inverse{\mathcal{T}} = (\inverse{Q}, \Sigma, \inverse{\delta})$ where $\inverse{Q}$ is a disjoint copy of $Q$ and the transitions are given by
    \[
      \inverse{\delta} = \{ \trans{q}{b}{a}{p} \mid \trans{q}{a}{b}{p} \in \delta \} \text{.}
    \]
    In general, $\inverse{\mathcal{T}}$ will not be deterministic (even if $\mathcal{T}$ is). However, if it is deterministic, we call $\mathcal{T}$ \emph{inverse-deterministic} or \emph{invertible}. In this case, $\inverse{\mathcal{T}}$ is an $S$-automaton and every state $\inverse{q} \in \inverse{Q}$ induces a function $\inverse{q} \circ{}\!$. By construction, it is easy to see that $\inverse{q} \circ{}\!$ is inverse in the sense of a semigroup inverse to the function $q \circ{}\!$ if $\mathcal{T}$ is not only inverse-deterministic but also deterministic. Motivated by this observation, we define $\inverse{\inverse{q}} = q$ and $\inverse{\bm{q}} = \inverse{q_1} \dots \inverse{q_n}$ for $\bm{q} = q_n \dots q_1$ with $q_1, \dots, q_n \in Q$. This turns taking the inverse into an involution as we have $\inverse{\inverse{\mathcal{T}}} = \mathcal{T}$.

    \paragraph{Inverse Automaton Semigroups and Automaton Groups.}
    It is not difficult to see that the functions $q \circ{}\!$ induced by the states $q \in Q$ of an inverse-deterministic $S$-automaton $\mathcal{T} = (Q, \Sigma, \delta)$ are partial one-to-one functions and that the semigroup $\mathscr{S}(\mathcal{T} \sqcup \inverse{\mathcal{T}})$ generated by the union of $\mathcal{T}$ and its inverse $\inverse{\mathcal{T}}$ is an inverse semigroup. We define $\inverse{\mathscr{S}}(\mathcal{T}) = \mathscr{S}(\mathcal{T} \sqcup \inverse{\mathcal{T}})$ and call this the \emph{inverse semigroup generated by $\mathcal{T}$}. Similar to the definition of $S$-automata, we call an inverse-deterministic $S$-automaton an \emph{$\inverse{S}$-automaton}. An inverse semigroup which is generated by some $\inverse{S}$-automaton is called an \emph{inverse automaton semigroup}. Please note that, a priori, there is a difference between an inverse automaton semigroup and an automaton semigroup which happens to be inverse. However, the name is justified as the two notions coincide \cite[Theorem~25]{structurePart}.
    
    We have already observed above that the functions $\bm{q} \circ{}\!$ in an automaton semigroup generated by a complete $S$-automaton are total. If the generating automaton is additionally inverse-deterministic, they are also one-to-one or injective. Because they are length-preserving, they can be restricted into (injective) functions $\Sigma^n \to \Sigma^n$ for every $n \geq 0$. Due to cardinality, these restrictions must be surjective and, thus, bijections. This implies that $\bm{q} \circ{}\! \colon \Sigma^* \to \Sigma^*$ must be a bijection as well and that the inverse semigroup generated by such an automaton is a group. To emphasize this fact, we write $\mathscr{G}(\mathcal{T})$ instead of $\inverse{\mathscr{S}}(\mathcal{T})$ in the case of complete $\inverse{S}$-automata and say that $\mathscr{G}(\mathcal{T})$ is the \emph{group generated by $\mathcal{T}$}. Additionally, we call a complete $\inverse{S}$-automaton a \emph{$G$-automaton} from now on. A group generated by some $G$-automaton is an \emph{automaton group}.

    \paragraph{Orbital and Schreier Graphs.}
    For an $S$-automaton $\mathcal{T}$, we define $\mathcal{T} \circ u$ as the \emph{orbital graph} of $u \in \Sigma^*$. Its node set is the orbit $Q^* \circ u = \{ \bm{q} \circ u \mid \bm{q} \in Q^*, \bm{q} \circ{}\! \text{ defined on $u$} \}$ of $u$ and it contains an edge $\transa{v}{q}{w}$ labeled with $q \in Q$ whenever $w = q \circ v$. An important observation to make in this context is that, for a path
    \begin{center}
      \begin{tikzpicture}[>=latex, auto]
        \node (v0) {$v_0$};
        \node[right=of v0] (v1) {$v_1$};
        \node[right=of v1] (dots) {$\dots$};
        \node[right=of dots] (vn) {$v_n$};
        \draw[->] (v0) edge node {$q_1$} (v1)
                  (v1) edge node {$q_2$} (dots)
                  (dots) edge node {$q_n$} (vn);
      \end{tikzpicture}
    \end{center}
    in the orbital graph, we actually have $v_n = q_n \dots q_1 \circ v_0$ and \emph{not} $v_n = q_1 \dots q_n \circ v_0$ (i.\,e., we have to read the label of the path right to left). This is because we let the semigroups act on the left.
    
    If $\mathcal{T}$ is additionally inverse-deterministic, we can also define $\widetilde{\mathcal{T}} \circ u$, the \emph{Schreier graph} of $u \in \Sigma^*$. It is similar to the orbital graph but it also contains edges for the states of the inverse $\inverse{\mathcal{T}}$. Thus, as its state set, we define $\widetilde{Q} \circ u = \{ \widetilde{\bm{q}} \circ u \mid \widetilde{\bm{q}} \in \widetilde{Q}^*, \widetilde{\bm{q}} \circ{}\! \text{ defined on $u$} \}$ where $\widetilde{Q} = Q \sqcup \inverse{Q}$ is the union of the states of $\mathcal{T}$ and $\inverse{\mathcal{T}}$. In addition to the edges $\transa{v}{q}{w}$ with $q \in Q$ if $w = q \circ v$ (as in the orbital graph), we also define the edges $\transa{w}{\inverse{q}}{v}$ for $\inverse{q} \in \inverse{Q}$ if $v = \inverse{q} \circ w$. With this definition, $\widetilde{\mathcal{T}} \circ u$ is a super-graph of $\mathcal{T} \circ u$. If $\mathcal{T}$ is a $G$-automaton, then it turns out that $Q^* \circ u$ and $\widetilde{Q}^* \circ u$ are of the same size and that, so, $\widetilde{\mathcal{T}} \circ u$ is basically the same graph as $\mathcal{T} \circ u$; the only difference is that $\widetilde{\mathcal{T}} \circ u$ additionally contains a back-edge $\transa{w}{\inverse{q}}{v}$ for every edge $\transa{v}{q}{w}$ (see, e.\,g., \cite[Lemma~1]{decidabilityPart} for this result with the same notation as used here). For a non-complete $\inverse{S}$-automaton, the Schreier graph $\widetilde{\mathcal{T}} \circ u$ can be strictly larger than the orbital graph $\mathcal{T} \circ u$, however (see \cite[Lemma~2]{decidabilityPart})!
    
    \paragraph{Expandability.}
    The most central notion for our results is the notion of expandability. Let $\mathcal{T} = (Q, \Sigma, \delta)$ be an $S$-automaton. For a number $k \geq 1$ and a word $x \in \Sigma^*$, we say that a word $u \in \Sigma^{*} $ is \emph{$k$-expandable by $x$} (with respect to $\mathcal{T}$) if $|Q^{*}\circ ux| - |Q^{*} \circ u| \ge k$. The word $u$ is \emph{$k$-expandable} if it is $k$-expandable by some $x \in \Sigma^*$ and it is \emph{expandable} if it is $k$-expandable for some $k \geq 1$. Notice that, in the case of non-complete automata, it is possible that $Q^{*} \circ uy$ is actually smaller than $Q^* \circ u$.
  \end{section}

  \newpage\enlargethispage{2\baselineskip}
  \begin{section}{Expandability is Decidable}\label{sec:decidable}
    We start with a rather simple combinatorial lemma.
    \begin{lemma}\label{lem:expandabilityBounds}
      Let $\mathcal{T} = (Q, \Sigma, \delta)$ be an $S$-automaton and let $u \in \Sigma^*$ be $k$-expandable. Then, there is an $x \in \Sigma^*$ such that
      \[
        n + k \leq \left| Q^* \circ ux \right| < (n + k) \, |\Sigma| \text{,}
      \]
      where $n = \left| Q^* \circ u \right|$.
    \end{lemma}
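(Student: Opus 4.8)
The plan is to reduce everything to a single \emph{one-step} bound: appending one letter to a word can enlarge the orbit by a factor of at most $|\Sigma|$. Concretely, I would first show that for every $w \in \Sigma^*$ and every letter $c \in \Sigma$,
\[
  \left| Q^* \circ wc \right| \leq |\Sigma| \cdot \left| Q^* \circ w \right| \text{.}
\]
The structural fact behind this is that the action processes its input from left to right and is therefore compatible with prefixes: whenever $\bm{q} \circ wc$ is defined, one has $\bm{q} \circ wc = (\bm{q} \circ w)\bigl((\bm{q} \cdot w) \circ c\bigr)$, so in particular the length-$|w|$ prefix of $\bm{q} \circ wc$ is exactly $\bm{q} \circ w$. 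This identity I would prove by a short induction on $|\bm{q}|$ using the inductive definition of the dual action ${}\cdot w$ (the single-state case is immediate, and the inductive step unfolds $\bm{p}q \cdot w = (\bm{p} \cdot (q \circ w))(q \cdot w)$). Consequently, the map sending each word $w'c' \in Q^* \circ wc$ (with $|w'| = |w|$ and $c' \in \Sigma$) to its prefix $w'$ lands inside $Q^* \circ w$, and each of its fibres has at most $|\Sigma|$ elements since only the final letter may vary; summing over the image gives the displayed bound.

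With the one-step bound in hand, I would fix a witness $x$ for the $k$-expandability of $u$, so that $|Q^* \circ ux| \geq n + k$, and examine the prefixes $x_0 = \varepsilon, x_1, \dots, x_m = x$ of $x$ together with the values $s_i = |Q^* \circ u x_i|$. We have $s_0 = n < n + k$ (using $k \geq 1$) and $s_m \geq n + k$, so there is a smallest index $j$ with $s_j \geq n + k$, and $j \geq 1$. Minimality yields $s_{j-1} < n + k$, and the one-step bound applied to $u x_{j-1}$ and the following letter of $x$ gives $s_j \leq |\Sigma| \cdot s_{j-1} < (n+k)\,|\Sigma|$. Hence $x' = x_j$ satisfies $n + k \leq |Q^* \circ u x'| < (n+k)\,|\Sigma|$, which is precisely the claim.

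The step I expect to need the most care is not the final counting but the justification that the orbit cannot jump by more than a factor of $|\Sigma|$ in a single step, i.e.\ the prefix-compatibility identity and the resulting projection map. The reason this matters is that, as the remark after the definition of expandability notes, for non-complete automata the sequence $(s_i)$ need \emph{not} be monotone: appending letters may make the orbit shrink. This rules out any naive \enquote{walk steadily up to the threshold} argument and is exactly why I would single out the first prefix reaching $n + k$ rather than reasoning about how the orbit evolves along $x$. The multiplicative one-step bound is the ingredient that survives non-monotonicity and controls the overshoot at the threshold.
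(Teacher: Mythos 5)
Your proof is correct and follows essentially the same route as the paper: both identify the prefix of the witness at which the orbit size first crosses the threshold $n+k$ and then control the overshoot by the one-step bound $\left| Q^* \circ wc \right| \leq \left| Q^* \circ w \right| |\Sigma|$. The only differences are immaterial: you select the first prefix reaching the threshold rather than the longest prefix below it, and you spell out the proof of the one-step bound (via the identity $\bm{q} \circ wc = (\bm{q} \circ w)\bigl((\bm{q} \cdot w) \circ c\bigr)$ and the projection onto prefixes), which the paper simply asserts.
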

    \begin{proof}
      Since $u$ is $k$-expandable, there is a $y \in \Sigma^*$ with $n + k \leq \left| Q^* \circ uy \right|$. Let $x'$ denote the longest prefix of $y$ such that $\left| Q^* \circ ux' \right| < n + k$, i.\,e., we have $y = x' a y'$ for some $a \in \Sigma$ and some $y' \in \Sigma^*$. By our choice of $x'$, we have $n + k \leq \left| Q^* \circ ux'a \right|$. As the size of $\left| Q^* \circ wa \right|$ is limited by $\left| Q^* \circ w \right| \left| \Sigma \right|$ for any word $w \in \Sigma^*$, this yields
      \[
        n + k \leq \left| Q^* \circ ux'a \right| \leq \left| Q^* \circ ux' \right| \left| \Sigma \right| < (n + k) \left| \Sigma \right| \text{.}
      \]
      Thus, $x = x'a$ satisfies the inequality in the lemma.
    \end{proof}

    We will use this lemma to prove that it is decidable whether an input word is ($k$-)\allowbreak{}expandable.

    \begin{thm}\label{theo: decidability k-expandability}
      The problem
      \problem{
        an $S$-automaton $\mathcal{T} = (Q, \Sigma, \delta)$,\newline
        a natural number $k$ and a word $u \in \Sigma^*$
      }{
        is $u$ $k$-expandable (with respect to $\mathcal{T}$)?
      }
      \noindent{}can be decided in nondeterministic space $\mathcal{O}((n + k)^2 \, |\Sigma| \, (1 + \log |Q|) + |u| (1 + \log |\Sigma|))$, where $n = \left| Q^* \circ u \right| \leq |\Sigma|^{|u|}$ is the size of the orbit of the input word.
      For the problem
      \problem[
        an $S$-automaton $\mathcal{T} = (Q, \Sigma, \delta)$,
      ]{
        a natural number $k$ and a word $u \in \Sigma^*$
      }{
        is $u$ $k$-expandable (with respect to $\mathcal{T}$)?
      }
      \noindent{}this yields nondeterministic space $2^{\mathcal{O}( |u| + \log k )}$.
    \end{thm}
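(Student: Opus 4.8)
The plan is to decide $k$-expandability by a nondeterministic \enquote{guess-and-check} procedure that guesses an orbit-increasing suffix $x$ one letter at a time while maintaining a compact description of the orbit $Q^* \circ ux'$ of the current prefix $ux'$ of the guessed word. The decisive input is \autoref{lem:expandabilityBounds}: if $u$ is $k$-expandable, then there is a witness $x$ for which the orbit stays strictly below $n + k$ along all proper prefixes and reaches size at least $n + k$ (but still below $(n+k)\,|\Sigma|$) only in the very last step; conversely, any prefix $x'$ with $|Q^* \circ ux'| \ge n + k$ already certifies $k$-expandability. Hence the algorithm accepts as soon as the maintained orbit has reached $n + k$ vertices, and it aborts any branch on which the orbit would exceed $(n+k)\,|\Sigma|$, since by the lemma such branches are never needed. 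The guessed word $x$ is itself never stored; its length will be recovered only afterwards, from the space bound, by counting configurations.

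The central data structure I would maintain is the orbital graph of $ux'$ represented \emph{abstractly}: its vertices are abstract identifiers for the words in $Q^* \circ ux'$, and each vertex $w$ carries, for every $q \in Q$ on which $q \circ{}\!$ is defined at $w$, the identifier of the target $q \circ w$ together with the residual state $q \cdot w \in Q$ coming from the dual action. This decoration is exactly what is needed to append one further letter $a$: lifting $a$ through the residuals, a vertex $w$ produces the successor vertices $(w, b)$ with $b = (q \cdot w) \circ a$, and the new residual decoration of such a successor is $(q \cdot w) \cdot a$. Thus one update step turns the description of $Q^* \circ ux'$ into that of $Q^* \circ ux'a$ using only the stored residuals and the letter $a$; crucially, every new vertex is encoded as a pair consisting of an old identifier and a letter, so duplicates are detected \emph{without} ever reconstructing the underlying (possibly very long) words. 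The base case, the orbital graph of $u$ together with its residuals $q \cdot v$, is built by a breadth-first exploration of $Q^* \circ u$ in which the length-$|u|$ words $\bm q \circ u$ are compared only transiently (a bounded number at a time), and in which the inductive counting technique is used to enumerate the distinct orbit elements in nondeterministic space.

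Correctness then follows from \autoref{lem:expandabilityBounds} as sketched: on a \enquote{yes}-instance there is an accepting branch whose maintained orbit never exceeds $(n+k)\,|\Sigma|$, while acceptance always exhibits a genuine witness $x'$. For the space bound, \autoref{lem:expandabilityBounds} caps the number of maintained vertices by $(n+k)\,|\Sigma|$ (and, as long as the search is still running, even by $n+k$, since we stop the moment this value is reached). A careful encoding of the adjacency-and-residual data attached to the vertices keeps the total size of the structure within $\mathcal{O}((n+k)^2\,|\Sigma|\,(1+\log|Q|))$, the additive $\mathcal{O}(|u|\,(1+\log|\Sigma|))$ accounting for the transient storage of the length-$|u|$ words in the base case together with $u$ itself; all counters needed for the exact orbit sizes are only logarithmic, and it is this exact counting via the Immerman–Szelepcsényi inductive counting technique that furnishes the stronger space bound.

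I expect the main obstacle to be precisely this space analysis, that is, designing a representation of $Q^* \circ ux'$ that simultaneously (i) can be updated after a single appended letter, (ii) never requires storing the guessed suffix nor recomputing the orbit from $ux'$, and (iii) still fits into the budget dictated by \autoref{lem:expandabilityBounds}; the residual decoration is what makes the update genuinely local, whereas the non-completeness of $\mathcal{T}$ (partial transitions and the fact that the orbit may even shrink) forces some additional bookkeeping. The second, automaton-independent bound is then immediate: treating $\mathcal{T}$ as a constant makes $|Q|$ and $|\Sigma|$ constant and gives $n \le |\Sigma|^{|u|} = 2^{\mathcal{O}(|u|)}$, so the whole expression collapses to $\mathcal{O}((n+k)^2) = \bigl(2^{\mathcal{O}(|u|)} + k\bigr)^2 = 2^{\mathcal{O}(|u| + \log k)}$.
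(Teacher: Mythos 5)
Your dynamic-graph strategy is sound as an \emph{algorithm}: maintaining a residual-decorated abstraction of the orbital graph of $ux'$, pruning branches via \autoref{lem:expandabilityBounds}, and accepting when the orbit reaches $n+k$ does correctly (semi-)decide $k$-expandability, and it does yield the second, automaton-independent bound $2^{\mathcal{O}(|u|+\log k)}$. But it does not establish the first claim, and this is a genuine gap rather than a detail. Your data structure stores, for every maintained vertex and every state $q\in Q$, a target identifier and a residual state; with up to $(n+k)\,|\Sigma|$ vertices this costs $\Theta\bigl((n+k)\,|\Sigma|\,|Q|\,(\log|Q|+\log((n+k)|\Sigma|))\bigr)$ bits in the worst case (for a complete automaton every state is defined at every vertex, and even recording definedness alone needs $|Q|$ bits per vertex). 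The theorem's bound $\mathcal{O}((n+k)^2\,|\Sigma|\,(1+\log|Q|))$ contains $|Q|$ only \emph{logarithmically}, so your structure blows past it whenever $|Q|$ is large compared to $n+k$ --- exactly the regime of a big automaton acting on a word with a small orbit. Your sentence \enquote{a careful encoding \dots keeps the total size of the structure within $\mathcal{O}((n+k)^2\,|\Sigma|\,(1+\log|Q|))$} is precisely the missing step, and I do not see how to supply it: since the underlying words are deliberately not stored, the per-vertex, per-state decoration cannot be recomputed on demand and must be kept explicitly. The paper sidesteps the out-degree factor $|Q|$ by never maintaining the orbit exactly: it guesses, once and for all, $n+k$ state sequences of length smaller than $K=(n+k)|\Sigma|-1$ (by \autoref{lem:expandabilityBounds} plus a pigeonhole argument on paths in the final orbital graph, such sequences suffice to witness $n+k$ distinct elements of $Q^*\circ ux$), then updates only their residuals under the guessed letters and a $\binom{n+k}{2}$-bit table recording which pairs have already produced different outputs. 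There every stored symbol is a single state, costing $\log|Q|$ bits, which is where the stated bound comes from.

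A secondary point: your update rule is garbled as written. The successor of a \emph{new} vertex $(w',b')$ under $q$ is $(q\circ w',\,(q\cdot w')\circ b')$ with residual $(q\cdot w')\cdot b'$ --- the residual at the source applied to the letter carried by the source --- and the new vertex set is what is reachable from the root pair $(ux',a)$; it is not the case that each old vertex $w$ spawns $(w,(q\cdot w)\circ a)$ from its own residuals and the original letter $a$, since the letter attached to $w$ depends on the path from the root to $w$. This slip is fixable, but the space gap above is not fixable within your representation; to obtain the bound actually claimed in the theorem you essentially have to abandon exact orbit maintenance in favour of the paper's witness-path bookkeeping.
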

    \begin{proof}
      First, note that all words in $Q^* \circ u$ are of length $|u|$. Thus, $n$ is bounded by $|\Sigma|^{|u|}$.

      Instead of giving a decision algorithm, we give a (nondeterministic) space-bounded semi-algorithm. The result then follows from the closure of nondeterministic space classes under complement (see, e.\,g., \cite[Theorem~7.6]{Pap94}).

      Note that, on input of $\mathcal{T} = (Q, \Sigma, \delta)$ and $u \in \Sigma^*$, we can compute $n = \left| Q^* \circ u \right|$ using the technique of inductive counting\footnote{Using inductive counting is only necessary to show the space bound stated in the theorem. With a laxer space bound, one could also compute $n$ using a naïve algorithm.} \cite[Theorem~7.6]{Pap94}. In order to do this, we only need to store words in $Q^* \circ u$, a binary counter for $n$ and states in $Q$. As all words in $Q^* \circ u$ have length $|u|$, we can store them in space $\mathcal{O}(|u| (1 + \log |\Sigma|))$. The binary counter for $n$ can also be realized in space $\mathcal{O}(|u| (1 + \log |\Sigma|))$ (since $n \leq |\Sigma|^{|u|}$). Finally, states can be stored in space $\mathcal{O}(1 + \log |Q|)$. Thus, all of these can be stored within the space bound $\mathcal{O}((n + k)^2 \, |\Sigma| \, (1 + \log |Q|) + |u| (1 + \log |\Sigma|))$ mentioned in the theorem.\footnote{Of course, a smaller space bound would suffice here but we need the additional space below.}

      We can solve the main part of the problem using a \enquote{guess and check} approach. We give a rather informal description of the algorithm here; pseudo-code for it can be found in \autoref{alg:expandability}. First, we guess $n + k$ state sequences $\bm{q}_1, \dots, \bm{q}_{n + k}$ of length smaller than $K = (n + k) |\Sigma| - 1$. The idea is that these state sequences lead to different elements in $Q^* \circ ux$ for some witness $x \in \Sigma^*$ for the $k$-expandability of $u$. Next, we compute $\bm{q}_i \cdot u$ for each of the state sequences. Finally, we guess $x$ letter by letter (without storing the previous letters). After we guessed a new letter $b \in \Sigma$, we update the stored state sequences $\bm{p}_i$ to $\bm{p}_i \cdot b$. While we do all of this, we also keep track of the pairs of state sequences for which we have already encountered a difference in their outputs on $u$ followed by the guessed letters. Whenever a transition is not defined, we simply cancel the respective computational branch.

      It is clear that, if the algorithm returns \enquote{$u$ is $k$-expandable}, then this is correct as the guessed state sequences and letters witness the expandability. On the other hand, it is sufficient to only considering state sequences of length smaller than $K$ to discover a witness if one exists. To see this, suppose that $u$ is $k$-expandable. Then, by \autoref{lem:expandabilityBounds}, there is an $x \in \Sigma^*$ such that $u$ is $k$-expandable by $x$ and the orbital graph $\mathcal{T} \circ ux$ is of size $\left| Q^* \circ ux \right| \leq K$. Now, any node in this graph is reachable from $ux$ by a path of length smaller than $K$ as longer paths need to include a loop by the pigeon hole principle. The labelings on these paths correspond to the elements $\bm{q}_i$ and, thus, bounding their length to be smaller than $K$ is no restriction.

      The most interesting part of the space analysis are the variables $\bm{q}_1, \dots, \bm{q}_{n + k}$ and the variable \texttt{differences}. Other variables, like elements of $\Sigma$ and counters up to $|u|$, $n$, $n + k$ or $K$, can certainly be realized in $\mathcal{O}(1 + \log |\Sigma| + \log |u| + \log (n + k))$ and, thus, in the space bound stated in the theorem. For storing a single variable $\bm{q}_i \in Q^{< K}$, we need space smaller than $K (1 + \log |Q|) < (n + k) \, |\Sigma| \, (1 + \log |Q|)$. Thus, for all variables $\bm{q}_1, \dots, \bm{q}_{n + k}$, we need less than $(n + k)^2 \, |\Sigma| \, (1 + \log |Q|)$ space, which is smaller than the required space bound. Finally, for \texttt{differences}, we need to store a bit for the $\binom{n + k}{2}$ many subsets of size $2$ of $\{ 1, \dots, n + k \}$. This yields a space requirement in $\mathcal{O}\left( \binom{n + k}{2} \right) \subseteq \mathcal{O}((n + k)^2)$.
    \end{proof}

    The algorithm also yields an upper bound on the length of a shortest word $x$ witnessing the $k$-expandability of $u$. This upper bound is obtained by counting the possible configurations of the automaton during during the letter guessing phase.
    \begin{algorithm}[!b]
      \begin{pseudocode}
function IskExpandable($\mathcal{T} = (Q, \Sigma, \delta)$, $k$, $u = a_1 \dots a_m$): $\mathbb{B}$;
const
  $n = \left| Q^* \circ u \right|$;
  $K = (n + k) |\Sigma| - 1$;
var
  $\bm{q}_1, \dots, \bm{q}_{n + k} \in Q^{< K}$;
  differences${}\subseteq \left\{ \{ i, j \} \mid i \neq j, 1 \leq i, j \leq n + k \right\}$;(*@ \hfill \normalfont{$\triangleright$ For which pairs have we seen a difference?} @*)
  $b \in \Sigma$;
begin
  for $i \in \{ 1, \dots, n + k \}$ do(*@ \hfill \normalfont{$\triangleright$ Guess initial $\bm{q}^{(0)}_1, \dots, \bm{q}^{(0)}_{n + k}$ values of $\bm{q}_1, \dots, \bm{q}_{n + k}$} @*)
    $\bm{q}_i \gets{}$guess($Q^{< K}$);
  od;
  differences${}\gets \emptyset$;(*@ \hfill \normalfont{$\triangleright$ Compute $\bm{q}^{(0)}_i \cdot u$} while checking for differences between pairs @*)
  for $\ell \in \{ 1, \dots, m \}$ do
    if $\exists i: \bm{q}_i \circ{}\! \text{ is undefined on } a_\ell$ then
      fail;
    fi;
    differences${}\gets{}$differences${}\cup \left\{ \{ i, j \} \mid \bm{q}_i \circ a_\ell \neq \bm{q}_j \circ a_\ell \right\}$;
    for $i \in \{ 1, \dots, n + k \}$ do
      $\bm{q}_i \gets \bm{q}_i \cdot a_\ell$;
    od;
  od;
  while true do(*@ \hfill \normalfont{$\triangleright$ Guess $x \in \Sigma^*$ letter-wise until we have seen a difference for every pair @*)
    if $\forall 1 \leq i, j \leq n + k: \{ i, j \} \in{}$differences then
      return (*@\textnormal{\enquote{$u$ is $k$-expandable}}@*)(*@ \hfill \normalfont{$\triangleright$ All $\bm{q}^{(0)}_i \circ ux$ are defined and pairwise disjoint @*)
    fi;
    $b \gets{}$guess($\Sigma$);(*@ \label{ln:guessWitness} @*)(*@ \hfill \normalfont{$\triangleright$ Guess next letter of $x$ @*)
    if $\exists i: \bm{q}_i \circ{}\! \text{ is undefined on } b$ then
      fail;
    fi;
    differences${}\gets{}$differences${}\cup \left\{ \{ i, j \} \mid \bm{q}_i \circ b \neq \bm{q}_j \circ b \right\}$;
    for $i \in \{ 1, \dots, n + k \}$ do
      $\bm{q}_i \gets \bm{q}_i \cdot b$;
    od;
  od;
end;
      \end{pseudocode}
      \caption{A nondeterministic algorithm to decide whether a word is $k$-expandable.}\label{alg:expandability}
    \end{algorithm}
    \begin{cor}\label{cor:upperBoundForExpandability}
      A word $u \in \Sigma^*$ is $k$-expandable with respect to some $S$-automaton $\mathcal{T} = (Q, \Sigma, \delta)$ if and only it is already $k$-expandable by some $x \in \Sigma^*$ with
      \[
        |x| < \left( \max \{ 2, |Q| \} \right)^{|\Sigma| \, (n + k)^2} 2^{\binom{n + k}{2}}\text{,}
      \]
      where $n = |Q^* \circ u|$.
    \end{cor}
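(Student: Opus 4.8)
The forward implication (a bounded witness obviously witnesses $k$-expandability) is immediate, so the plan is to establish the reverse: if $u$ is $k$-expandable, then a witness of the claimed length already exists. Following the hint of the preceding remark, I would read the bound directly off \autoref{alg:expandability} by bounding the number of distinct \emph{configurations} the algorithm can pass through during the final \texttt{while} loop (the letter-guessing phase), and then applying a loop-removal argument.

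A configuration at this stage is the pair consisting of the tuple $(\bm{q}_1, \dots, \bm{q}_{n+k})$ of stored state sequences together with the current value of \texttt{differences}. First I would observe that the dual action $\!{}\cdot b$ is length-preserving, so each $\bm{q}_i$ stays in $Q^{< K}$ for $K = (n + k)|\Sigma| - 1$ throughout the loop; hence the number of admissible tuples is at most $|Q^{< K}|^{n+k}$. Bounding $|Q^{< K}| \le (\max \{ 2, |Q| \})^{K}$ (the $\max$ absorbing the degenerate case $|Q| = 1$, where $Q^{< K}$ has $K \le 2^{K}$ elements) and using $K(n+k) \le |\Sigma|\,(n+k)^2$ gives at most $(\max \{ 2, |Q| \})^{|\Sigma|\,(n+k)^2}$ tuples. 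Since \texttt{differences} ranges over subsets of the $\binom{n+k}{2}$ pairs, there are at most $2^{\binom{n+k}{2}}$ choices for it. Multiplying, the number $N$ of configurations is bounded by the right-hand side of the corollary.

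Next I would run the loop-removal (pumping) argument. Take an accepting computation of minimal guessed-word length $T = |x|$; such a computation exists by the correctness established in \autoref{theo: decidability k-expandability}. List the configurations $C_0, \dots, C_T$ occurring in the \texttt{while} loop, with $C_t$ the configuration after the $t$-th guessed letter. If two coincided, say $C_s = C_t$ with $s < t$, then deleting the guessed infix $b_{s+1} \cdots b_t$ would yield a strictly shorter computation that neither fails (the tuple at $C_s$ matches the one from which $b_{t+1}, \dots, b_T$ were already applied successfully) nor loses any accumulated difference (\texttt{differences} only grows, by union), hence still accepts; this contradicts minimality. So $C_0, \dots, C_T$ are pairwise distinct, giving $T + 1 \le N$, i.e.\ $|x| = T < N$. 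Finally I would record that an accepting run with guessed word $x$ exhibits $n+k$ pairwise distinct images $\bm{q}_i \circ ux$ (all pairs lie in \texttt{differences}), so $|Q^* \circ ux| \ge n+k$ and thus $x$ genuinely witnesses $k$-expandability with $|x|$ below the stated bound.

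The main obstacle is the soundness of the loop removal: one must be sure that excising a segment between two identical configurations neither introduces an undefined transition nor destroys the acceptance condition. Both points rest on two structural features of the algorithm that I would state explicitly before performing the cut, namely that \texttt{differences} is \emph{monotone} (updated only by union) and that a configuration together with the remaining guessed letters deterministically determines the rest of the run.
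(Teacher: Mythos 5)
Your proposal is correct and follows essentially the same route as the paper's own proof: both bound the number of configurations (the tuple $(\bm{q}_1, \dots, \bm{q}_{n+k})$ in $Q^{<K}$ together with \texttt{differences}) arising in the letter-guessing phase of \autoref{alg:expandability} and then remove computational cycles to shorten the witness $x$. Your additional care in justifying the soundness of the cut (monotonicity of \texttt{differences} and determinism of the run from a fixed configuration) only makes explicit what the paper leaves implicit.
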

    \begin{proof}
      If $u$ is $k$-expandable, then \autoref{alg:expandability} will return \enquote{$u$ is $k$-expandable} on some computational branch. The letters guessed at \autoref{ln:guessWitness} for this branch yield a witness $x \in \Sigma^*$ for which $\left| Q^* \circ ux \right| \geq n + k$ holds. However, if, at any two points of the branch, the variables $\bm{q}_1, \dots, \bm{q}_{n + k}$ and \texttt{differences} have the same values, then the computation has a cycle and we can shorten $x$ by that cycle. This means that, without loss of generality, we can assume the length of $x$ to be bounded by the number of different configurations of these variables:
      \[
        |x| \leq \left| Q^{< K} \right|^{n + k} \cdot 2^{\binom{n + k}{2}} \text{.}
      \]
      For $|Q| \geq 2$, we have
      \[
        \left| Q^{<K} \right| = \sum_{i = 0}^{K - 1} |Q|^i = \frac{|Q|^K - 1}{|Q| - 1} < |Q|^K < |Q|^{|\Sigma| \, (n + k)}
      \]
      and, for $|Q| = 1$, we have
      \[
        \left| Q^{<K} \right| = \sum_{i = 0}^{K - 1} |Q|^i = K < 2^K < 2^{|\Sigma| \, (n + k)} \text{.}\qedhere
      \]
    \end{proof}
  \end{section}
  
  \begin{section}{Inverse Automaton Structures}\label{sec:inverse}
    \paragraph*{Expandability and Invertible Automata.}
    For inverse-deterministic au\-to\-ma\-ta, the notion of expandability is linked with an algebraic property of the stabilizer, which we will investigate next. For an $S$-automaton $\mathcal{T} = (Q, \Sigma, \delta)$, the set of \emph{stabilizing} state sequences for some word $u \in \Sigma^*$ is $\Stab_\mathcal{T}(u) = \{ \bm{q} \in Q^+ \mid \text{$\bm{q} \circ{}\!$ defined on $u$ and } q \circ u = u \}$. The subsemigroup of $\mathscr{S}(\mathcal{T})$ generated by $\Stab_\mathcal{T}(u)$ is the \emph{stabilizer} of $u$, which we denote by $\Stab_\mathcal{T}(u) \circ{}\!$. Because it will be useful, we also defined $\Stab^1_\mathcal{T}(u) = \Stab_\mathcal{T}(u) \cup \{ \varepsilon \}$.
    
    For any set of state sequences $\bm{P} \subseteq Q^*$, we also define the shifted set $\bm{P} \cdot u = \{ \bm{p} \cdot u \mid \bm{p} \in \bm{P}, \!{}\cdot u \text{ defined on } \bm{p} \}$. This, in particular, leads to the set of sequences stabilizing $u$ \emph{shifted by $u$}
    \[
      \Stab_\mathcal{T}(u) \cdot u = \{ \bm{q} \cdot u \mid \bm{q} \in \Stab_\mathcal{T}(u), \!{}\cdot u \text{ defined on $\bm{q}$} \} \text{.}
    \]
    Of course, we can also consider the orbit of some word $x \in \Sigma^*$ under the action of this set:
    \[
      \Stab^1_\mathcal{T}(u) \cdot u \circ x = \{ \bm{p} \circ x \mid \bm{p} \in \Stab^1_\mathcal{T}(u) \cdot u, \bm{p} \circ{}\! \text{ defined on } x \} \text{.}
    \]

    In this section, $\mathcal{T}$ will usually be an $\inverse{S}$-automaton. In this case, we mostly consider stabilization for the automaton $\mathcal{T} \sqcup \inverse{\mathcal{T}}$, which also includes inverse states. So, we have $\Stab^1_{\mathcal{T} \sqcup \inverse{\mathcal{T}}}(u) = \{ \bm{\wt{q}} \in \wt{Q}^* \mid \bm{\wt{q}} \circ{}\! \text{ defined on $u$ and } \bm{\wt{q}} \circ u = u \}$ for example where $\wt{Q} = Q \sqcup \inverse{Q}$ is the state set of $\mathcal{T} \sqcup \inverse{\mathcal{T}}$.
    
    With these definitions in place, we can explore the connection between an increased orbit size and the shifted stabilizer.
    \begin{lemma}\label{lem:expandedOrbitIsMultiple}
      Let $\mathcal{T} = (Q, \Sigma, \delta)$ be an $\inverse{S}$-automaton and let $\wt{Q} = Q \sqcup \inverse{Q}$ be the union of the states of $\mathcal{T}$ and $\inverse{\mathcal{T}}$.
      
      Furthermore, let $u, x \in \Sigma^*$ be words such that every $\bm{p} \circ{}\!$ for $\bm{p} \in \wt{Q}^* \cdot u$ is defined on all $y \in \Stab^1_{\mathcal{T} \sqcup \inverse{\mathcal{T}}}(u) \cdot u \circ x$. Then, we have
      \[
        \left| \wt{Q}^* \circ ux \right| = \left| \wt{Q}^* \circ u \right| \cdot \left| \Stab^1_{\mathcal{T} \sqcup \inverse{\mathcal{T}}}(u) \cdot u \circ x \right| \text{.}
      \]
      In particular, this is the case for all $u, x \in \Sigma^*$ if $\mathcal{T}$ is a $G$-automaton (i.\,e., complete).
    \end{lemma}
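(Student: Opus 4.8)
The plan is to prove the identity as a combinatorial version of the orbit--stabilizer theorem. The crucial structural fact I would start from is the factorization $\bm{\wt q} \circ ux = (\bm{\wt q} \circ u)(\bm{\wt q} \cdot u \circ x)$, valid for every $\bm{\wt q} \in \wt{Q}^*$ for which the left-hand side is defined; it follows from $\bm{\wt q} \cdot ux = (\bm{\wt q} \cdot u) \cdot x$. Consequently every element of $\wt{Q}^* \circ ux$ is a word of length $|u| + |x|$ whose length-$|u|$ prefix lies in $\wt{Q}^* \circ u$, so taking this prefix is a well-defined map $\pi \colon \wt{Q}^* \circ ux \to \wt{Q}^* \circ u$. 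Writing $O = \Stab^1_{\mathcal{T} \sqcup \inverse{\mathcal{T}}}(u) \cdot u \circ x$ for brevity, I would prove the formula by showing that $\pi$ is surjective and that each of its fibers has exactly $|O|$ elements; since the fibers partition $\wt{Q}^* \circ ux$, summing over the $|\wt{Q}^* \circ u|$ of them gives the claimed product.

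To analyze the fiber over a node $v \in \wt{Q}^* \circ u$, I would fix a witness $\bm{\wt r}_v \in \wt{Q}^*$ with $\bm{\wt r}_v \circ u = v$ (choosing $\varepsilon$ for $v = u$) and set $\bm p_v = \bm{\wt r}_v \cdot u \in \wt{Q}^* \cdot u$. The target description of the fiber is that its set of tails equals $\{ \bm p_v \circ y \mid y \in O \}$. The inclusion ``$\supseteq$'' is the easy direction: for $y = \bm q \cdot u \circ x$ with $\bm q \circ u = u$, I would apply $\bm{\wt r}_v \bm q$ to $ux$; the dual-action rule gives $(\bm{\wt r}_v \bm q) \cdot u = \bm p_v (\bm q \cdot u)$ and $(\bm{\wt r}_v \bm q) \circ u = v$, so $\bm{\wt r}_v \bm q$ realizes $v (\bm p_v \circ y)$ in the orbit, the definedness of $\bm p_v \circ y$ being exactly the hypothesis. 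Taking $y = x$ (witnessed by $\bm q = \varepsilon$) here also shows $x \in O$ and that $\pi$ is surjective. Finally, since the partial functions $\bm p \circ{}\!$ of an $\inverse{S}$-automaton are one-to-one, the map $y \mapsto \bm p_v \circ y$ is injective, so once both inclusions are established each fiber has exactly $|O|$ elements.

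The main obstacle is the reverse inclusion ``$\subseteq$'': every tail $z$ over $v$ must be shown to arise as $\bm p_v \circ y$ for some $y \in O$. Writing $vz = \bm{\wt s} \circ ux$ with $\bm{\wt s} \circ u = v$ and $z = \bm{\wt s} \cdot u \circ x$, the natural candidate preimage is $y_0 = \inverse{\bm p_v} \circ z$, and the difficulty is that $\inverse{\bm p_v}$ need not obviously be defined on $z$. The idea I would use to resolve this is to set $\bm t = \inverse{\bm{\wt r}_v} \bm{\wt s}$; from $\inverse{\bm{\wt r}_v} \circ v = u$ one checks $\bm t \in \Stab^1_{\mathcal{T} \sqcup \inverse{\mathcal{T}}}(u)$, and the dual-action rule rewrites $\inverse{\bm p_v} \circ z$ as $(\bm t \cdot u) \circ x$ with $\bm t \cdot u \in \Stab^1_{\mathcal{T} \sqcup \inverse{\mathcal{T}}}(u) \cdot u \subseteq \wt{Q}^* \cdot u$. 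Now I would invoke the hypothesis in the special case $y = x \in O$, which guarantees that every element of $\wt{Q}^* \cdot u$, in particular $\bm t \cdot u$, is defined on $x$; hence $y_0 = (\bm t \cdot u) \circ x$ is defined and lies in $O$. This definedness simultaneously places $z$ in the image of $\bm p_v \circ{}\!$, so the inverse identity $\bm p_v \inverse{\bm p_v} \circ z = z$ yields $z = \bm p_v \circ y_0$. Throughout I would rely on $\inverse{\bm{\wt r}_v} \cdot v = \inverse{(\bm{\wt r}_v \cdot u)}$ (so that $\inverse{\bm p_v} \circ{}\!$ is genuinely inverse to $\bm p_v \circ{}\!$) and on $\wt{Q}^* \cdot u$ being closed under taking inverses. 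For the concluding ``in particular'' assertion I would simply note that a $G$-automaton is complete, whence $\mathcal{T} \sqcup \inverse{\mathcal{T}}$ is complete and all partial functions are total, so the definedness hypothesis holds vacuously for all $u, x$.
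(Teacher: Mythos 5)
Your proposal is correct and takes essentially the same approach as the paper: the paper's proof constructs the bijection $(\bm{r}, \bm{s}) \mapsto \bm{r}\bm{s} \circ ux = (\bm{r} \circ u)\left(\bm{r} \cdot u \circ (\bm{s} \cdot u \circ x)\right)$ from $\left(\wt{Q}^* \circ u\right) \times \left(\Stab^1_{\mathcal{T} \sqcup \inverse{\mathcal{T}}}(u) \cdot u \circ x\right)$ onto $\wt{Q}^* \circ ux$, which is precisely your fiber decomposition read along the prefix projection $\pi$. In particular, your injectivity step (one-to-one-ness of $\bm{p}_v \circ{}\!$) and your hard direction (forming $\bm{t} = \inverse{\bm{\wt{r}}_v}\bm{\wt{s}}$, checking it stabilizes $u$, and invoking the definedness hypothesis at $x$) coincide with the paper's injectivity and surjectivity arguments.
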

    \begin{proof}
      We will prove the lemma by giving a bijection
      \[
        \wt{Q}^* \circ u \times \Stab^1_{\mathcal{T} \sqcup \inverse{\mathcal{T}}}(u) \cdot u \circ x \to \wt{Q}^* \circ ux \text{.}
      \]
      For this, we choose representatives $\bm{r}_1, \dots, \bm{r}_n \in \wt{Q}^*$ for $\wt{Q}^* \circ u$ in the sense that $\{ \bm{r}_1 \circ u, \dots, \bm{r}_n \circ u \} = \wt{Q}^* \circ u$ and $n = | \wt{Q}^* \circ u |$. Similarly, we choose $\bm{s}_1, \dots, \bm{s}_m \in \Stab^1_{\mathcal{T} \sqcup \inverse{\mathcal{T}}}(u)$ with $\{ \bm{s}_1 \cdot u \circ x, \dots, \bm{s}_m \cdot u \circ x \} = \Stab^1_{\mathcal{T} \sqcup \inverse{\mathcal{T}}}(u) \cdot u \circ x$ and $m = | \Stab^1_{\mathcal{T} \sqcup \inverse{\mathcal{T}}}(u) \cdot u \circ x |$.

      Notice that, having chosen these representatives, it now suffices to give a bijection $R \times S \to \wt{Q}^* \circ ux$ for the sets $R = \{ \bm{r}_1, \dots, \bm{r}_n \}$ and $S = \{ \bm{s}_1, \dots, \bm{s}_m \}$. We define this mapping by $(\bm{r}, \bm{s}) \mapsto \bm{r} \bm{s} \circ ux = (\bm{r} \circ u) (\bm{r} \cdot u \circ (\bm{s} \cdot u \circ x))$. Notice that the right hand side is defined, in particular, $\bm{r} \cdot u \circ{}\!$ is defined on $\bm{s} \cdot u \circ x$.

      To see that the map is injective, assume $(\bm{r}, \bm{s}) \neq (\bm{r}', \bm{s}')$ for $\bm{r}, \bm{r}' \in R$ and $\bm{s}, \bm{s}' \in S$. If we have $\bm{r} \neq \bm{r}'$, then we have $\bm{r} \circ u \neq \bm{r}' \circ u$ as $\bm{r}$ and $\bm{r}'$ were chosen as representatives. If we have $\bm{r} = \bm{r}'$ but $\bm{s} \neq \bm{s}'$, then we must have $\bm{s} \cdot u \circ x \neq \bm{s}' \cdot u \circ x$ for the same reason and, since $\bm{r} \circ{}\! = \bm{r}' \circ{}\!$ and, thus, also $\bm{r} \cdot u \circ{}\! = \bm{r}' \cdot u \circ{}\!$ are one-to-one, this implies $\bm{r} \cdot u \circ \bm{s} \cdot u \circ x \neq \bm{r}' \cdot u \circ \bm{s}' \cdot u \circ x$.

      Finally, to show the surjectivity of the map, we need to find a preimage of $\bm{\wt{q}} \circ ux$ for any $\bm{\wt{q}} \in \wt{Q}^*$. Such a preimage is given by $(\bm{r}, \bm{s})$ where $\bm{r} \in R$ is chosen as the representative of $v = \bm{\wt{q}} \circ u$ (i.\,e., we have $\bm{r} \circ u = v$) and $\bm{s}$ as the representative of $x' = \inverse{\bm{r}} \bm{\wt{q}} \cdot u \circ x$ (i.\,e., we have $\bm{s} \cdot u \circ x = \inverse{\bm{r}} \bm{\wt{q}} \cdot u \circ x = x'$). For the latter, we need to observe that, under the assumptions of the lemma, $\inverse{\bm{r}} \bm{\wt{q}} \cdot u \circ{}\!$ is defined on $x$ and to show that $\inverse{\bm{r}} \bm{\wt{q}}$ is in $\Stab^1_{\mathcal{T} \sqcup \inverse{\mathcal{T}}}(u)$: we have $\inverse{\bm{r}} \bm{\wt{q}} \circ u = \inverse{\bm{r}} \circ v = \inverse{\bm{r}} \bm{r} \circ u = u$ by the choice of $\bm{r}$. To show that $(\bm{r}, \bm{s})$ is in fact a preimage of $\bm{\wt{q}} \circ uv$, it remains to observe $\bm{r} \cdot u \circ \bm{s} \cdot u \circ x = \bm{r} \cdot u \circ \inverse{\bm{r}} \bm{\wt{q}} \cdot u \circ x = \bm{r} \inverse{\bm{r}} \bm{\wt{q}} \cdot u \circ x = \bm{\wt{q}} \cdot u \circ x$.
    \end{proof}

    For the case of $G$-automata, we are able to give a more concise statement regarding the connection between expandability and the (shifted) stabilizer:
    \begin{prop}\label{prop: group case for expandability}
      Let $\mathcal{T} = (Q, \Sigma, \delta)$ be a $G$-automaton. A word $u \in \Sigma^{*} $ is expandable if and only if
      \[
        \Stab_{\mathcal{T} \sqcup \inverse{\mathcal{T}}}(u) \cdot u \circ{}\! = \{ \bm{\wt{q}} \cdot u \circ{}\! \mid \bm{\wt{q}} \in \Stab_{\mathcal{T} \sqcup \inverse{\mathcal{T}}}(u) \} \neq \{ \idGrp \}.
      \]
    \end{prop}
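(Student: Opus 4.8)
The plan is to reduce the proposition to a direct consequence of the orbit-size factorization established in \autoref{lem:expandedOrbitIsMultiple}. First I would unfold the definition of expandability in the group setting. Since $\mathcal{T}$ is a $G$-automaton and expandability only requires $k$-expandability for \emph{some} $k \geq 1$, the word $u$ is expandable if and only if there is an $x \in \Sigma^*$ with $\left| Q^* \circ ux \right| > \left| Q^* \circ u \right|$. Because $\mathcal{T}$ is a $G$-automaton, I can freely pass between the orbit under $Q^*$ and the orbit under $\wt{Q}^*$ with $\wt{Q} = Q \sqcup \inverse{Q}$: as recalled in the preliminaries, for a $G$-automaton these orbits have the same size for any word, so $\left| Q^* \circ w \right| = \left| \wt{Q}^* \circ w \right|$ for $w \in \{ u, ux \}$. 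Hence expandability of $u$ is equivalent to the existence of an $x$ with $\left| \wt{Q}^* \circ ux \right| > \left| \wt{Q}^* \circ u \right|$.

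Next I would invoke \autoref{lem:expandedOrbitIsMultiple}, which applies to all $u, x \in \Sigma^*$ since $\mathcal{T}$ is complete, yielding the factorization $\left| \wt{Q}^* \circ ux \right| = \left| \wt{Q}^* \circ u \right| \cdot \left| \Stab^1_{\mathcal{T} \sqcup \inverse{\mathcal{T}}}(u) \cdot u \circ x \right|$. As $\left| \wt{Q}^* \circ u \right| \geq 1$ (the orbit always contains $u = \varepsilon \circ u$), dividing shows that $\left| \wt{Q}^* \circ ux \right| > \left| \wt{Q}^* \circ u \right|$ holds exactly when $\left| \Stab^1_{\mathcal{T} \sqcup \inverse{\mathcal{T}}}(u) \cdot u \circ x \right| \geq 2$. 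Thus $u$ is expandable if and only if there is an $x$ for which the shifted stabilizer sends $x$ to at least two distinct words.

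The final step is to translate the condition ``$\left| \Stab^1_{\mathcal{T} \sqcup \inverse{\mathcal{T}}}(u) \cdot u \circ x \right| \geq 2$ for some $x$'' into the algebraic nontriviality of $\Stab_{\mathcal{T} \sqcup \inverse{\mathcal{T}}}(u) \cdot u \circ{}\!$. Here the key observation is that $\varepsilon \in \Stab^1_{\mathcal{T} \sqcup \inverse{\mathcal{T}}}(u)$ with $\varepsilon \cdot u \circ x = x$, so $x$ itself always lies in $\Stab^1_{\mathcal{T} \sqcup \inverse{\mathcal{T}}}(u) \cdot u \circ x$; consequently this set has at least two elements precisely when some $\bm{\wt{q}} \in \Stab_{\mathcal{T} \sqcup \inverse{\mathcal{T}}}(u)$ satisfies $\bm{\wt{q}} \cdot u \circ x \neq x$. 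Swapping the two existential quantifiers (over $x$ and over $\bm{\wt{q}}$), the existence of such a pair is exactly the assertion that some $\bm{\wt{q}} \cdot u \circ{}\!$ with $\bm{\wt{q}} \in \Stab_{\mathcal{T} \sqcup \inverse{\mathcal{T}}}(u)$ is not the identity function, i.e. $\Stab_{\mathcal{T} \sqcup \inverse{\mathcal{T}}}(u) \cdot u \circ{}\! \neq \{ \idGrp \}$, which is the claimed equivalence.

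I do not expect a genuine obstacle, as the proposition is essentially a repackaging of \autoref{lem:expandedOrbitIsMultiple}. The one point demanding care is the bookkeeping between the $Q^*$- and $\wt{Q}^*$-orbits, so that the factorization lemma (phrased for $\wt{Q}^*$) can be combined with the $Q^*$-phrased definition of expandability, together with the harmless but essential role of $\varepsilon$: it guarantees that $x$ always survives in the shifted-stabilizer orbit and thereby makes ``size $\geq 2$'' coincide with ``nontrivial''.
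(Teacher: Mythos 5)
Your proof is correct and follows essentially the same route as the paper: invoke the equality $|Q^* \circ w| = |\wt{Q}^* \circ w|$ for $G$-automata, apply the factorization of \autoref{lem:expandedOrbitIsMultiple} (valid for all $x$ by completeness), and read off the equivalence. The only difference is that you spell out the final translation -- the role of $\varepsilon$ in identifying $\Stab^1$ with $\Stab$ and making ``orbit size $\geq 2$'' coincide with nontriviality -- which the paper leaves implicit in its closing sentence.
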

    \begin{proof}
      Since $\mathcal{T}$ is a $G$-automaton, we have $|Q^* \circ w| = |\wt{Q}^* \circ w|$ for every word $w \in \Sigma^*$ (see, e.\,g., \cite[Lemma~1]{decidabilityPart}). Combining this fact with \autoref{lem:expandedOrbitIsMultiple} yields
      \[
        \left| Q^* \circ ux \right| = \left| Q^* \circ u \right| \cdot \left| \Stab_{\mathcal{T} \sqcup \inverse{\mathcal{T}}}(u) \cdot u \circ x \right|
      \]
      for all words $x \in \Sigma^*$, as $\mathcal{T}$ is complete. Thus, $u$ is $k$-expandable by a word $x \in \Sigma^*$ for some integer $k \geq 1$ if and only if $\left| \Stab_{\mathcal{T} \sqcup \inverse{\mathcal{T}}}(u) \cdot u \circ x \right| > 1$, which proves the proposition.
    \end{proof}

    Notice that the analogous result for inverse automaton semigroups does not hold:
    \begin{prop}\label{prop:partialStabExpandability}
      Let $\mathcal{T} = (Q, \{ a, b \}, \delta)$ be the following $\inverse{S}$-automaton:
      \begin{center}
        \begin{tikzpicture}[auto, shorten >=1pt, >=latex]
          \node[state] (q) {$q$};
          \node[state, right=of q] (p) {$p$};

          \path[->] (q) edge node {$a/b$} (p)
                    (p) edge[loop right] node {$b/a$} (p);
        \end{tikzpicture}
      \end{center}
      Then, $a$ is not expandable but $\Stab_{\mathcal{T} \sqcup \inverse{\mathcal{T}}}(a) \cdot a  \circ{}\!$ contains more than one element.
    \end{prop}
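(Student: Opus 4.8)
Since the automaton has only two states over a binary alphabet, the plan is to make everything fully explicit and to treat the two assertions separately. The first step is to read off the single-state partial functions of $\mathcal{T} \sqcup \inverse{\mathcal{T}}$: from the transitions we have $q \circ a = b$ with $q \cdot a = p$, the loop $p \circ b = a$ with $p \cdot b = p$, and dually $\inverse{q} \circ b = a$ with $\inverse{q} \cdot b = \inverse{p}$ together with the loop $\inverse{p} \circ a = b$ and $\inverse{p} \cdot a = \inverse{p}$ (every state being undefined on the other letter). Iterating, $q \circ{}\!$ is defined exactly on $a b^*$ (with $a b^k \mapsto b a^k$), $p \circ{}\!$ exactly on $b^*$ (with $b^k \mapsto a^k$), and symmetrically $\inverse{q} \circ{}\!$ on $b a^*$ and $\inverse{p} \circ{}\!$ on $a^*$. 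This \enquote{each state reads only a very restricted language} phenomenon drives both parts.

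\textbf{$a$ is not expandable.} For the first claim I would compute the orbit of $ax$ for an arbitrary suffix $x$ using only the forward states $Q = \{q, p\}$, via a case distinction on whether $x \in b^*$. If $x = b^k$, then $ax = a b^k$ and the only applicable state is $q$, producing the single edge to the sink $b a^k$, so $Q^* \circ ax = \{a b^k, b a^k\}$ has size $2$ (in particular $|Q^* \circ a| = 2$). If $x$ contains a letter $a$, then $ax$ lies in neither $a b^*$ nor $b^*$, so neither $q \circ{}\!$ nor $p \circ{}\!$ is defined on it; thus $ax$ is an isolated sink and $|Q^* \circ ax| = 1$. In every case $|Q^* \circ ax| \le 2 = |Q^* \circ a|$, so the orbit can never grow and $a$ is not $k$-expandable for any $k \ge 1$.

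\textbf{The shifted stabilizer is non-trivial.} For the second claim I pass to $\mathcal{T} \sqcup \inverse{\mathcal{T}}$, where the orbital graph of $a$ is simply $a \leftrightarrow b$ (leaving $a$ via $q$ or $\inverse{p}$, returning via $p$ or $\inverse{q}$); hence every length-two closed walk at $a$ yields an element of $\Stab_{\mathcal{T} \sqcup \inverse{\mathcal{T}}}(a)$. Using the dual recursion $\bm{p} q \cdot u = (\bm{p} \cdot (q \circ u)) (q \cdot u)$ I would compute two such shifts, for instance $pq \cdot a = pp$ and $\inverse{q} q \cdot a = \inverse{p} p$. It then remains to compare the induced functions: $pp \circ{}\!$ is defined only on $\varepsilon$, whereas $\inverse{p} p \circ{}\!$ is the identity on all of $b^*$. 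As these partial functions have different domains, they are distinct elements of $\Stab_{\mathcal{T} \sqcup \inverse{\mathcal{T}}}(a) \cdot a \circ{}\!$, which therefore contains more than one element.

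\textbf{Main obstacle.} The only genuine subtlety -- and precisely the reason the clean equivalence of \autoref{prop: group case for expandability} fails in the non-complete setting -- is partiality. In the group case every shifted element $\bm{\wt{q}} \cdot u \circ{}\!$ is a total bijection, so a non-trivial shifted stabilizer forces orbit growth; here the states are defined only on proper sublanguages, which lets a non-trivial shifted stabilizer (two partial identities living on different domains) coexist with a frozen orbit. The corresponding care in the write-up is to track the domains of the partial functions accurately and to establish distinctness through differing domains rather than differing values, since on the common point $\varepsilon$ the relevant functions all agree. Beyond this, everything reduces to a finite, mechanical verification.
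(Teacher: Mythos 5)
Your proof is correct, and both halves hit the same witnesses as the paper's own argument; the difference lies in how much machinery each computes. The paper determines the \emph{entire} inverse semigroup $\inverse{\mathscr{S}}(\mathcal{T})$ (thirteen partial functions, presented via a Cayley graph and eggbox diagram), reads off the stabilizer $\{pq \circ{}\!, \inverse{q}q \circ{}\!, p\inverse{p} \circ{}\!\}$ and its shift, and then argues non-expandability by inspecting the orbital graphs $\mathcal{T} \circ a$, $\mathcal{T} \circ aa$ and $\mathcal{T} \circ ab$ together with the remark that every node there has out-degree at most one, so that appending further letters cannot enlarge the orbit. You avoid the global computation entirely: your exhaustive case distinction on the suffix ($x \in b^*$ versus $x$ containing an $a$) shows $|Q^* \circ ax| \le 2 = |Q^* \circ a|$ for \emph{every} $x$ in one stroke, which is leaner and, if anything, more airtight than the paper's figure-plus-lifting argument, since it never needs to justify why the out-degree property persists under extension. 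For the second claim you exhibit only the two shifted elements $pq \cdot a = pp$ and $\inverse{q}q \cdot a = \inverse{p}p$ rather than pinning down the whole shifted stabilizer, which suffices because the statement asks only for more than one element. Your closing observation is exactly the right point of care: $pp \circ{}\!$ and $\inverse{p}p \circ{}\!$ agree wherever both are defined, so their distinctness must be established through their domains, and this partiality is precisely why a non-trivial shifted stabilizer can coexist with a non-expandable word, in contrast to the group case of \autoref{prop: group case for expandability}.
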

    \begin{proof}
      A calculation shows that $\inverse{\mathscr{S}}(\mathcal{T})$ contains the elements
      \begin{alignat*}{6}
        q \circ{}\!&:{}& ab^n &\mapsto ba^n & \qquad
        pq \circ{}\!&:{}& a &\mapsto a & \qquad
        qpq \circ{}\!&:{}& a &\mapsto b
        \\
        p \circ{}\!&:{}& b^n &\mapsto a^n &
        \inverse{q}q \circ{}\!&:{}& ab^n &\mapsto ab^n &
        pqp \circ{}\!&:{}& b &\mapsto a
        \\
        \inverse{q} \circ{}\!&:{}& ba^n &\mapsto ab^n &
        qp \circ{}\!&:{}& b &\mapsto b &
        &&&
        \\
        \inverse{p} \circ{}\!&:{}& a^n &\mapsto b^n &
        \inverse{p}p \circ{}\!&:{}& b^n &\mapsto b^n &
        &&&
        \\
        &&&&
        q\inverse{q} \circ{}\!&:{}& ba^n &\mapsto ba^n &
        &&&
        \\
        &&&&
        p\inverse{p} \circ{}\!&:{}& a^n &\mapsto a^n &
        &&&
      \end{alignat*}
      and, additionally, an element $\bot \circ{}\!$ which is undefined on any word (except the empty word) and, thus, a zero in the semigroup. The left Cayley graph and the eggbox presentation of the semigroup can be found in \autoref{fig:cayleyGraph}.
      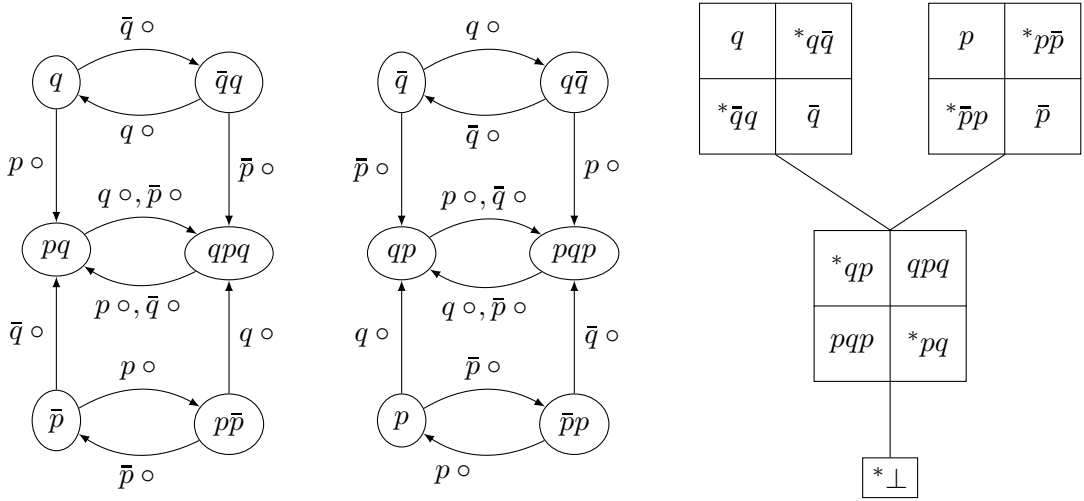
\begin{figure}\centering%
          \begin{tikzpicture}[>=latex, auto, node distance=1.5cm]
            \node (q) at (-146.32bp,94.015bp) [draw,ellipse] {$q$};
            \node[right=of q] (qiq) [draw,ellipse] {$\inverse{q}q$};
            \node[below=of q] (pq) [draw,ellipse] {$pq$};
            \node[below=of qiq] (qpq) [draw,ellipse] {$qpq$};
            \node[below=of pq] (pi) [draw,ellipse] {$\inverse{p}$};
            \node[below=of qpq] (ppi) [draw,ellipse] {$p\inverse{p}$};
            
            \node[right=of qiq] (qi) [draw,ellipse] {$\inverse{q}$};
            \node[right=of qi] (qqi) [draw,ellipse] {$q\inverse{q}$};
            \node[below=of qi] (qp) [draw,ellipse] {$qp$};
            \node[below=of qqi] (pqp) [draw,ellipse] {$pqp$};
            \node[below=of qp] (p) [draw,ellipse] {$p$};
            \node[below=of pqp] (pip) [draw,ellipse] {$\inverse{p}p$};

            \draw [->] (p) edge node {$q \circ{}\!$} (qp);
            \draw [->] (p) edge[bend left] node {$\inverse{p} \circ{}\!$} (pip);
            \draw [->] (pi) edge node {$\inverse{q} \circ{}\!$} (pq);
            \draw [->] (pi) edge[bend left] node {$p \circ{}\!$} (ppi);
            \draw [->] (pip) edge[bend left] node {$p \circ{}\!$} (p);
            \draw [->] (pip) edge node[swap] {$\inverse{q} \circ{}\!$} (pqp);
            \draw [->] (ppi) edge[bend left] node[below] {$\inverse{p} \circ{}\!$} (pi);
            \draw [->] (ppi) edge node[swap] {$q \circ{}\!$} (qpq);
            \draw [->] (pq) edge[bend left] node[above] {$q \circ{}\!, \inverse{p} \circ{}\!$} (qpq);
            \draw [->] (pqp)edge[bend left] node[below] {$q \circ{}\!, \inverse{p} \circ{}\!$} (qp);
            \draw [->] (q) edge[bend left] node {$\inverse{q} \circ{}\!$} (qiq);
            \draw [->] (q) edge node[swap] {$p \circ{}\!$} (pq);
            \draw [->] (qi) edge node[swap] {$\inverse{p} \circ{}\!$} (qp);
            \draw [->] (qi) edge[bend left] node {$q \circ{}\!$} (qqi);
            \draw [->] (qiq) edge[bend left] node[below] {$q \circ{}\!$} (q);
            \draw [->] (qiq) edge node {$\inverse{p} \circ{}\!$} (qpq);
            \draw [->] (qp) edge[bend left] node {$p \circ{}\!, \inverse{q} \circ{}\!$} (pqp);
            \draw [->] (qpq) edge[bend left] node {$p \circ{}\!, \inverse{q} \circ{}\!$} (pq);
            \draw [->] (qqi) edge[bend left] node {$\inverse{q} \circ{}\!$} (qi);
            \draw [->] (qqi) edge node {$p \circ{}\!$} (pqp);
          \end{tikzpicture}\hspace{0.75cm}
          \begin{tikzpicture}
            \matrix[matrix of math nodes, draw, every node/.style={anchor=center, minimum size=1cm}, inner sep=0pt] (Dq) {
              $q$ & ${}^* q \inverse{q}$ \\
              ${}^* \inverse{q} q$ & $\inverse{q}$ \\
            };
            \draw (Dq.north) -- (Dq.south);
            \draw (Dq.west) -- (Dq.east);
            
            \matrix[matrix of math nodes, draw, every node/.style={anchor=center, minimum size=1cm}, inner sep=0pt, right=of Dq] (Dp) {
              $p$ & ${}^* p \inverse{p}$ \\
              ${}^* \inverse{p} p$ & $\inverse{p}$ \\
            };
            \draw (Dp.north) -- (Dp.south);
            \draw (Dp.west) -- (Dp.east);

            \matrix[matrix of math nodes, draw, every node/.style={anchor=center, minimum size=1cm}, inner sep=0pt, anchor=north, yshift=-1cm] at ($(Dq.south)!0.5!(Dp.south)$) (Dqp) {
              ${}^* qp$ & $qpq$ \\
              $pqp$ & ${}^* pq$ \\
            };
            \draw (Dqp.north) -- (Dqp.south);
            \draw (Dqp.west) -- (Dqp.east);

            \node[rectangle, draw, anchor=north, below=1cm of Dqp.south] (bot) {${}^* \bot$};
            
            \draw (Dq.south) -- (Dqp.north);
            \draw (Dp.south) -- (Dqp.north);
            \draw (Dqp.south) -- (bot.north);
          \end{tikzpicture}
        \caption{The left Cayley graph and the eggbox representation of the inverse automaton semigroup generated by the automaton from \autoref{prop:partialStabExpandability}. Edges to $\bot \circ{}\!$ are not drawn.}\label{fig:cayleyGraph}
      \end{figure}

      One can observe that $\Stab_{\mathcal{T} \sqcup \inverse{\mathcal{T}}}(a) \circ{}\! = \{ pq \circ{}\!, \inverse{q}q \circ{}\!, p\inverse{p} \circ{}\! \}$ and, thus, $\Stab_{\mathcal{T} \sqcup \inverse{\mathcal{T}}}(a) \cdot a \circ{}\! = \{ pp \circ{}\! = \bot \circ{}\!, \inverse{p}p \circ{}\! \}$. Notice that, although these $pp \circ{}\! = \bot \circ{}\!$ and $\inverse{p}p \circ{}\!$ are different, there is no word $x \in \{ a, b \}^*$ such that both of them are defined on $x$ but differ in their output. This is the reason why $a$ is not expandable. To see this, consider the orbital graphs $\mathcal{T} \circ a$, $\mathcal{T} \circ aa$ and $\mathcal{T} \circ ab$ depicted in \autoref{fig:orbitalGraphs}. Notice that in the latter two, there is at most one outgoing edge at each node. Therefore, even by appending further letters to the word, the resulting orbital graph will not contain more nodes.
      \begin{figure}[t]
        \begin{center}
          \begin{tikzpicture}[>=latex, auto, shorten >=1pt,
            light/.style={
              color=gray!70, prefix after command={\pgfextra{\tikzset{every label/.style={color=gray!70}}}}},
          ]
            \node[rectangle, draw] (a) {$a$};
            \node[rectangle, draw, above=of a] (b) {$b$};
            \path[->] (a.north west) edge[bend left] node {$q$} (b.south west)
                      (b.south west) edge[bend left] node {$p$} (a.north west)
            ;

            \node[rectangle, draw, right=3cm of a] (aa) {$aa$};
            \node[rectangle, draw, above=of aa, light] (bb) {$bb$};
            \path[->, light] (bb.south west) edge[bend left] node {$p$} (aa.north west)
            ;

            \node[rectangle, draw, right=3cm of aa] (ab) {$ab$};
            \node[rectangle, draw, above=of ab] (ba) {$ba$};
            \path[->] (ab.north west) edge[bend left] node {$q$} (ba.south west)
            ;
          \end{tikzpicture}
        \end{center}
        \caption{The orbital graphs $\mathcal{T} \circ a$ (left), $\mathcal{T} \circ aa$ (center) and $\mathcal{T} \circ ab$ (right). To obtain the Schreier graphs $\wt{\mathcal{T}} \circ a$, $\wt{\mathcal{T}} \circ aa$ and $\wt{\mathcal{T}} \circ ab$, one has to add the \textcolor{gray!70}{light} node and reverse edges.}\label{fig:orbitalGraphs}
      \end{figure}
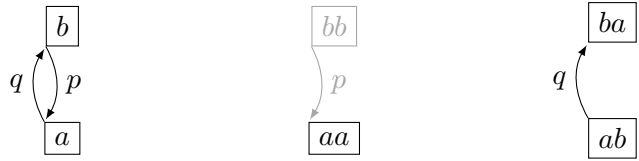
    \end{proof}

    \paragraph*{The Group Case.}
    In the case of automaton groups, it is possible to obtain a nondeterministic algorithm to decide the expandability of a given word, which is slightly more efficient than the one from \autoref{theo: decidability k-expandability}. This more efficient algorithm also yields a better upper bound on the length of a shortest suffix witnessing the expandability of a word compared to \autoref{cor:upperBoundForExpandability}. The basic idea is to exploit the equivalence between the non-triviality of the shifted stabilizer and the expandability given in \autoref{prop: group case for expandability}.
    
    \begin{lemma}\label{lem:upperBoundForShiftedStabilizerElement}
      Let $\mathcal{T} = (Q, \Sigma, \delta)$ be a $G$-automaton and let $\wt{Q} = Q \sqcup \inverse{Q}$ be the union of the states of $\mathcal{T}$ and $\inverse{\mathcal{T}}$.
      
      If the shifted stabilizer $\Stab_{\mathcal{T} \sqcup \inverse{\mathcal{T}}}(u) \cdot u \circ{}\!$ of some word $u \in \Sigma^*$ is non-trivial, then it already contains an element $\bm{\wt{q}} \in \wt{Q}^*$ with $\bm{\wt{q}} \circ{}\! \neq \idGrp$ and $|\bm{\wt{q}}| < 2 n$ where $n = |Q^* \circ u|$.
    \end{lemma}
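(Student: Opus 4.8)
The plan is to realize the shifted stabilizer as the image of the point stabilizer under the \emph{section} (restriction) homomorphism at $u$, and then to exhibit short generators of the point stabilizer by a Reidemeister--Schreier / spanning-tree argument. Since $\mathcal{T}$ is a $G$-automaton, the group $G = \mathscr{G}(\mathcal{T})$, generated by $\wt{Q} = Q \sqcup \inverse{Q}$, acts on $\Sigma^*$ by bijections, and $|\wt{Q}^* \circ u| = |Q^* \circ u| = n$. Thus the orbit of $u$ is a strongly connected graph on $n$ vertices, and $\Stab_{\mathcal{T} \sqcup \inverse{\mathcal{T}}}(u) \circ{}\!$ is exactly the stabilizer $\Stab_G(u)$ of the point $u$ (the set of group elements fixing $u$ is a subgroup, and every such element is $\bm{\wt{q}} \circ{}\!$ for some $\bm{\wt{q}}$ fixing $u$).

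First I would fix, for every $v$ in the orbit of $u$, a state sequence $g_v \in \wt{Q}^*$ of minimal length with $g_v \circ u = v$; choosing shortest paths guarantees $|g_v| \leq n - 1$, and I set $g_u = \varepsilon$. For each orbit point $v$ and each generator $\wt{q} \in \wt{Q}$, writing $w = \wt{q} \circ v$, the \emph{Schreier generator} $\bm{s}_{v, \wt{q}} = \inverse{g_w}\, \wt{q}\, g_v$ satisfies $\bm{s}_{v, \wt{q}} \circ u = u$ (since $g_v \circ u = v$, $\wt{q} \circ v = w$ and $\inverse{g_w} \circ w = u$, using $\inverse{g_w} g_w \circ{}\! = \idGrp$) and has length $|g_w| + 1 + |g_v| \leq 2n - 1 < 2n$. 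A standard telescoping argument shows these generate $\Stab_G(u)$: given any $h = \wt{q}_\ell \cdots \wt{q}_1 \circ{}\!$ fixing $u$, inserting $g_{v_i} \inverse{g_{v_i}}$ at the intermediate orbit points $v_i = \wt{q}_i \cdots \wt{q}_1 \circ u$ (with $v_0 = v_\ell = u$) rewrites $h$ as a product of the $\bm{s}_{v, \wt{q}}$.

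Next I would use that the shift $\bm{\wt{q}} \mapsto \bm{\wt{q}} \cdot u \circ{}\!$ restricts to a homomorphism $\sigma_u \colon \Stab_G(u) \to G$: for $\bm{p}, \bm{q}$ fixing $u$ the product formula gives $\bm{p}\bm{q} \cdot u = (\bm{p} \cdot u)(\bm{q} \cdot u)$ because $\bm{q} \circ u = u$, and $\sigma_u$ is well defined since $\bm{\wt{q}} \circ{}\! = \bm{\wt{q}}' \circ{}\!$ together with $\bm{\wt{q}} \circ u = \bm{\wt{q}}' \circ u = u$ forces $\bm{\wt{q}} \cdot u \circ{}\! = \bm{\wt{q}}' \cdot u \circ{}\!$ (compare the images of $ux$ for all $x$). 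The image of $\sigma_u$ is precisely the shifted stabilizer $\Stab_{\mathcal{T} \sqcup \inverse{\mathcal{T}}}(u) \cdot u \circ{}\!$, and it is therefore generated by the shifts $\sigma_u(\bm{s}_{v, \wt{q}}) = \bm{s}_{v, \wt{q}} \cdot u \circ{}\!$ of the Schreier generators.

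Finally, if the shifted stabilizer is non-trivial, then $\sigma_u(\bm{s}_{v, \wt{q}}) \neq \idGrp$ for at least one Schreier generator, and $\bm{\wt{q}} := \bm{s}_{v, \wt{q}} \cdot u$ is the desired witness: it lies in $\Stab_{\mathcal{T} \sqcup \inverse{\mathcal{T}}}(u) \cdot u$, acts non-trivially, and $|\bm{\wt{q}}| = |\bm{s}_{v, \wt{q}}| < 2n$ because the dual action $\!{}\cdot u$ preserves length. I expect the main technical points to be the well-definedness and homomorphism property of $\sigma_u$ and the bookkeeping in the telescoping step; the factor $2$ in the bound is exactly the cost of travelling to an orbit point and back along the chosen shortest paths.
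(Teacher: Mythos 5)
Your proof is correct and follows essentially the same route as the paper: the paper fixes a spanning tree of the Schreier graph $\wt{\mathcal{T}} \circ u$ and uses the loops induced by non-tree edges --- which are exactly your Schreier generators, with the tree paths playing the role of your shortest-path representatives $g_v$ --- to obtain generators of the stabilizer of length at most $2n-1$, whose length-preserving shifts then generate the shifted stabilizer. Since that shifted stabilizer is assumed non-trivial, one of these shifted generators must act non-trivially, exactly as in your final step.
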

    \begin{proof}
      Consider the Schreier graph $\wt{\mathcal{T}} \circ u$. It contains $|\wt{\mathcal{T}} \circ u| = |\wt{Q}^* \circ u| = n$ elements (since Schreier and orbital graphs are of the same size, see, e.\,g., \cite[Lemma~1]{decidabilityPart}). Obviously, the elements $\bm{\wt{q}} \in \Stab^1_{\mathcal{T} \sqcup \inverse{\mathcal{T}}}(u)$ of the stabilizer are in one-to-one correspondence with the labels of loops beginning and ending in $u$. If we fix some spanning tree for the Schreier graph $\wt{\mathcal{T}} \circ u$, then every edge $v_1 \overset{q}{\rightarrow} v_2$ \emph{not} belonging to this spanning tree induces a loop $c$ beginning and ending in $u$: first, follow the unique path from $u$ to $v_1$ on the spanning tree, then, take the edge $v_1 \overset{q}{\rightarrow} v_2$ itself, finally, return from $v_2$ to $u$ on the spanning tree again. Notice that this loop contains at most $n - 1 + 1 + n - 1 = 2n - 1$ edges since any (reduced) path on the spanning tree can visit any node at most once. There are only finitely many such loops and they generate the set of all loops beginning and ending in $u$. Let $C = \{ c_1, \dots, c_m \}$ denote the set of the labels $c_i \in \wt{Q}^*$ of these generating loops. Then, $C \circ{}\! = \{ c_1 \circ{}\!, \dots, c_m \circ{}\! \}$ generates the stabilizer $\Stab^1_{\mathcal{T} \sqcup \inverse{\mathcal{T}}}(u) \circ{}\!$ and its shifted version $C \cdot u \circ{}\! = \{ c_1 \cdot u \circ{}\!, \dots, c_m \cdot u \circ{}\! \}$ is a generating set for the shifted stabilizer $\Stab_{\mathcal{T} \sqcup \inverse{\mathcal{T}}}(u) \cdot u \circ{}\!$. Since, by assumption, this shifted stabilizer contains another element besides the identity, there must be at least one generator $c_i \cdot u \circ{}\!$ with $c_i \cdot u \circ{}\! \neq \idGrp$. This is the sought element.
    \end{proof}
    
    This allows us to give a guess and check algorithm to decide the expandability of a given word, which is simpler and more efficient than \autoref{alg:expandability}.
    \begin{thm}
      The problem
      \problem{
        a $G$-automaton $\mathcal{T} = (Q, \Sigma, \delta)$ and\newline
        a word $u \in \Sigma^*$
      }{
        is $u$ expandable (with respect to $\mathcal{T}$)?
      }
      \noindent{}can be decided in nondeterministic space $\mathcal{O}(n(1 + \log |Q|) + |u|(1 + \log |\Sigma|))$ where $n = |Q^* \circ u| \leq |\Sigma|^{|u|}$. For the problem
      \problem[
        a $G$-automaton $\mathcal{T} = (Q, \Sigma, \delta)$
      ]{
        a word $u \in \Sigma^*$
      }{
        is $u$ expandable (with respect to $\mathcal{T}$)?
      }
      \noindent{}this yields nondeterministic space $2^{\mathcal{O}(|u|)}$.
    \end{thm}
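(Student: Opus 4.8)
The plan is to turn the algebraic characterization of \autoref{prop: group case for expandability} into a guess-and-check algorithm and to use \autoref{lem:upperBoundForShiftedStabilizerElement} to bound the object we have to guess. By \autoref{prop: group case for expandability}, $u$ is expandable if and only if the shifted stabilizer $\Stab_{\mathcal{T} \sqcup \inverse{\mathcal{T}}}(u) \cdot u \circ{}\!$ is non-trivial, and by \autoref{lem:upperBoundForShiftedStabilizerElement} this is witnessed by a short stabilizing state sequence: there is some $\bm{\wt{q}} \in \wt{Q}^{< 2n}$ with $\bm{\wt{q}} \circ u = u$ whose shift acts non-trivially, i.\,e.\ $(\bm{\wt{q}} \cdot u) \circ{}\! \neq \idGrp$. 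The algorithm guesses such a short $\bm{\wt{q}}$, verifies that it stabilizes $u$ (computing the shift $\bm{\wt{q}} \cdot u$ along the way), and finally verifies that the shifted sequence acts non-trivially; it accepts exactly when $u$ is expandable.

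In more detail, I would first compute $n = |Q^* \circ u|$ by inductive counting exactly as in the proof of \autoref{theo: decidability k-expandability}; this only needs to store words of length $|u|$ and a binary counter up to $n \leq |\Sigma|^{|u|}$, i.\,e.\ $\mathcal{O}(|u|(1 + \log|\Sigma|))$ space. Next, guess $\bm{\wt{q}} = \wt{q}_\ell \dots \wt{q}_1 \in \wt{Q}^{< 2n}$ and store it; since $|\wt{Q}| = 2|Q|$ and $\ell < 2n$, this fits in $\mathcal{O}(n(1 + \log|Q|))$ space. To check $\bm{\wt{q}} \circ u = u$ and simultaneously compute $\bm{\wt{q}} \cdot u$, process $u = a_1 \dots a_m$ letter by letter: push the current letter through the stored sequence from $\wt{q}_1$ to $\wt{q}_\ell$ (following the recursive definition of the dual action), compare the produced output letter with the input letter, and overwrite each $\wt{q}_j$ by its shifted successor. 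If any output letter differs from the corresponding letter of $u$, this branch fails. Because $\mathcal{T}$ is a $G$-automaton, $\wt{\mathcal{T}} = \mathcal{T} \sqcup \inverse{\mathcal{T}}$ is complete and the dual action is total and length-preserving, so the stored sequence keeps its length $\ell$ throughout and everything stays within the same space bound.

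The last and most delicate step is to verify $(\bm{\wt{q}} \cdot u) \circ{}\! \neq \idGrp$ for $\bm{p} = \bm{\wt{q}} \cdot u$ in bounded space. Here I would again proceed nondeterministically: guess a word $x$ letter by letter \emph{without storing it}, maintaining only the current shifted sequence $\bm{p}' = \bm{p} \cdot (\text{prefix already guessed})$, and accept as soon as a guessed letter $a$ satisfies $\bm{p}' \circ a \neq a$. The key point making this halt in the claimed space is that the reachable shifted sequences $\{ \bm{p} \cdot y \mid y \in \Sigma^* \}$ all have length $\ell < 2n$ and hence form a set of size at most $|\wt{Q}|^{2n}$; if $\bm{p}$ acts non-trivially, some reachable sequence acts non-trivially on some single letter and is therefore reached by a path of length below $|\wt{Q}|^{2n}$. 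Thus it suffices to run the letter-guessing phase for at most $|\wt{Q}|^{2n}$ steps, controlled by a binary counter needing $\mathcal{O}(\log |\wt{Q}|^{2n}) = \mathcal{O}(n(1 + \log|Q|))$ space, rejecting the branch when the counter is exhausted. This is exactly the step I expect to require the most care: arguing that bounding the search does not lose witnesses (a shortest path in the reachability graph of shifted sequences visits no sequence twice) and confirming that the counter, the single stored sequence, and a constant number of auxiliary letters all fit into $\mathcal{O}(n(1 + \log|Q|) + |u|(1 + \log|\Sigma|))$.

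Putting the three phases together yields a nondeterministic algorithm that accepts precisely the expandable words within the stated space bound; all branches halt, so no appeal to closure under complement is needed (although, as in \autoref{theo: decidability k-expandability}, one could alternatively phrase the non-triviality test as a semi-algorithm and invoke it). The bound for the constant-automaton version follows by treating $|Q|$ and $|\Sigma|$ as constants: then $n \leq |\Sigma|^{|u|} = 2^{\mathcal{O}(|u|)}$ dominates, and the whole space requirement collapses to $2^{\mathcal{O}(|u|)}$.
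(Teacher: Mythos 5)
Your proposal is correct and follows essentially the same route as the paper: compute $n$ by inductive counting, guess a stabilizing sequence $\bm{\wt{q}} \in \wt{Q}^{<2n}$ (justified by \autoref{prop: group case for expandability} and \autoref{lem:upperBoundForShiftedStabilizerElement}), verify $\bm{\wt{q}} \circ u = u$ while computing $\bm{\wt{q}} \cdot u$ letter by letter, and then guess a witness for $\bm{\wt{q}} \cdot u \circ{}\! \neq \idGrp$ letter by letter. The only difference is your explicit step counter bounded by $|\wt{Q}|^{2n}$ to make every branch halt; the paper omits this in the theorem's proof and instead uses exactly that configuration-counting argument afterwards, in \autoref{cor:upperBoundForExpandabilityGroup}, to bound the witness length.
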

    \begin{proof}
      As in the proof of \autoref{theo: decidability k-expandability}, we first compute $n$ using the technique of inductive counting. This is possible in space $\mathcal{O}(n(1 + \log |Q|) + |u|(1 + \log |\Sigma|))$.
      
      Then, we guess a sequence of states $\bm{\wt{q}} \in \wt{Q}^*$ (where $\wt{Q} = Q \sqcup \inverse{Q}$ is the union of the states of $\mathcal{T}$ and its inverse $\inverse{\mathcal{T}}$) with length $|\bm{\wt{q}}| < 2n$. To store such a sequence, we need space $\mathcal{O}(n(1 + \log |Q|))$. Next, we compute $\bm{\wt{q}} \cdot u$ letter by letter (of $u$). Simultaneously, we check whether $\bm{\wt{q}} \in \Stab^1_{\mathcal{T} \sqcup \inverse{\mathcal{T}}}(u)$ by computing $\bm{\wt{q}} \circ u$ and comparing the result with $u$ (again, we do this letter by letter). For this, we need to store only single letters ($\mathcal{O}(1 + \log |\Sigma|)$) and some pointer ($\mathcal{O}(\log |u|)$).
      
      Finally, we solve the word problem \enquote{$\bm{\wt{q}} \cdot u \circ{}\! \neq \idGrp$?} by guessing a witness $x \in \Sigma^*$ with $\bm{\wt{q}} \cdot u \circ x \neq x$. As before, we guess $x$ letter by letter and update the stored state sequence accordingly. At the same time, we check whether at least one output letter differs from its input letter.
      
      Clearly, if we can guess a state sequence $\bm{\wt{q}}$ and a witness $x \in \Sigma^*$ with $\bm{\wt{q}} \circ u = u$, $\bm{\wt{q}} \cdot u \circ x \neq x$, the shifted stabilizer of $u$ is non-trivial and $u$ is expandable (by \autoref{prop: group case for expandability}). If, on the other hand, $u$ is expandable, then its shifted stabilizer is non-trivial (again, by \autoref{prop: group case for expandability}) and, by \autoref{lem:upperBoundForShiftedStabilizerElement}, it contains a non-trivial element of length smaller than $2n$. Some computational branch of the above algorithm will guess this element $\bm{\wt{q}}$ and a corresponding witness $x$ for its non-triviality.
    \end{proof}
    
    Analyzing the last part of the presented algorithm yields a better upper bound for the witness $x$ in the group case (compared to the general case presented in \autoref{cor:upperBoundForExpandability}).
    \begin{cor}\label{cor:upperBoundForExpandabilityGroup}
      A word $u \in \Sigma^*$ is expandable with respect to some $G$-automaton $\mathcal{T} = (Q, \Sigma, \delta)$ if and only if it is already expandable by some $x \in \Sigma^*$ with
      \[
        |x| < (2 \, |Q|)^{2n}
      \]
      where $n = |Q^* \circ u|$.
    \end{cor}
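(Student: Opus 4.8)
The plan is to read off the length bound from the final, word-problem-solving phase of the algorithm in the preceding theorem, just as \autoref{cor:upperBoundForExpandability} was extracted from \autoref{alg:expandability}. The key simplification here is that this phase only needs to track a \emph{single} state sequence (there is no \texttt{differences} set to carry along), which is exactly what yields the improved bound.

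First I would set up the element to be tested. By \autoref{prop: group case for expandability}, if $u$ is expandable, then the shifted stabilizer $\Stab_{\mathcal{T} \sqcup \inverse{\mathcal{T}}}(u) \cdot u \circ{}\!$ is non-trivial, and by \autoref{lem:upperBoundForShiftedStabilizerElement} it already contains a non-trivial element of the form $\bm{p} = \bm{\wt{q}} \cdot u$ with $\bm{\wt{q}} \in \Stab_{\mathcal{T} \sqcup \inverse{\mathcal{T}}}(u)$ and $|\bm{\wt{q}}| < 2n$; since the dual action preserves length, $\bm{p}$ is a state sequence of length $m = |\bm{\wt{q}}| < 2n$. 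Because $\bm{p} \circ{}\! \neq \idGrp$, there is some $x \in \Sigma^*$ with $\bm{p} \circ x \neq x$, and any such $x$ already witnesses expandability: the shifted stabilizer contains $\bm{p}$ with $\bm{p} \circ x \neq x$, so it is non-trivial, and \autoref{prop: group case for expandability} turns this into a genuine orbit increase $|Q^* \circ ux| > n$.

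The heart of the argument is bounding the length of a \emph{shortest} such $x = b_1 \cdots b_L$. I would track the state sequences $\bm{p}_0 = \bm{p}$ and $\bm{p}_i = \bm{p} \cdot b_1 \cdots b_i$, all of the same length $m < 2n$, and let $L$ be the first position at which the output letter differs from the input, i.e. $\bm{p}_{L-1} \circ b_L \neq b_L$. Suppose two of the configurations $\bm{p}_0, \dots, \bm{p}_{L-1}$ coincide, say $\bm{p}_i = \bm{p}_j$ with $i < j \le L-1$. Then deleting $b_{i+1} \cdots b_j$ from $x$ gives a strictly shorter word whose run, from $\bm{p}_i = \bm{p}_j$ onward, reproduces the configurations $\bm{p}_{j+1}, \dots, \bm{p}_L$ verbatim (since $\bm{p}_i \cdot b_{j+1} = \bm{p}_j \cdot b_{j+1}$ and the output letter $\bm{p}_i \circ b_{j+1} = \bm{p}_j \circ b_{j+1}$ agree); in particular the same discrepancy is reached at the end, so the shortened word is still a witness—contradicting minimality. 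Hence for a shortest $x$ the configurations $\bm{p}_0, \dots, \bm{p}_{L-1}$ are pairwise distinct.

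Finally I would count: each $\bm{p}_i$ is a word of length $m$ over $\wt{Q}$ with $|\wt{Q}| = 2|Q|$, so there are at most $(2|Q|)^m$ possibilities, giving $|x| = L \leq (2|Q|)^m \leq (2|Q|)^{2n-1} < (2|Q|)^{2n}$. The reverse implication is immediate from the definition of expandability. I do not expect a serious obstacle; the only point that needs care is verifying that cutting a repeated configuration really preserves the witness property, which hinges on the fact that both the dual transition $\bm{p}_i \cdot b$ and the output letter $\bm{p}_i \circ b$ depend only on the current state sequence $\bm{p}_i$ and not on the history of previously guessed letters.
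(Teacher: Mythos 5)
Your proposal is correct and is essentially the paper's own argument: the paper likewise obtains the bound by taking the non-trivial shifted-stabilizer element of length $<2n$ from \autoref{lem:upperBoundForShiftedStabilizerElement} (via \autoref{prop: group case for expandability}) and then eliminating repeated values of the stored state sequence while the witness $x$ is produced letter by letter, counting the possible configurations. The only differences are cosmetic: you spell out the loop-cutting step in detail and count $\wt{Q}^{m}$ with $m=|\bm{\wt{q}}|$ fixed rather than $|\wt{Q}^{<2n}|$, which gives a marginally sharper constant but the same bound $(2\,|Q|)^{2n}$.
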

    \begin{proof}
      We use the same analysis as in the proof of \autoref{cor:upperBoundForExpandability}: consider the last part of the algorithm, in which the witness $x$ is guessed letter by letter. If, during this guessing, the stored state sequence $(\bm{\wt{q}} \cdot u) \cdot x$ has the same value as some time before on the same computational branch, then this computational loop can be eliminated. Thus, we only need to count the possible values for the stored state sequence:
      \[
        \left| \wt{Q}^{<2n} \right| < |\wt{Q}|^{2n} = (2 \, |Q|)^{2n} \qedhere
      \]
    \end{proof}
  \end{section}

  \begin{section}{Reversible and Complete Automata.}\label{sec:reversible}
    We have seen that expandability is decidable. Next, we will focus on the special case of reversible and complete automata.

    We start, however, with the following lemma, which does not require reversibility or completeness.\footnote{The lemma is a rather simple generalization of \cite[Corollary~3.5]{cain2009automaton}. It is closely related to the word problem for automaton structures, for which a more thorough discussion -- including some lower bound results -- can be found in \cite{DAngeli2017} (see also \cite{pspacePart}).}
    \begin{lemma}\label{lem:wordProblemUpperBound}
      Let $\mathcal{T} = (Q, \Sigma, \delta)$ be an $S$-automaton and let $\bm{p}, \bm{q}, \bm{r} \in Q^*$. If there is a word on which $\bm{p} \bm{r}$ and $\bm{q} \bm{r}$ act differently, then there is such a word of length at most $|Q|^{|\bm{p}| + |\bm{q}| + |\bm{r}|}$.
    \end{lemma}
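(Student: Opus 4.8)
The plan is to run the two compositions $\bm{p}\bm{r}$ and $\bm{q}\bm{r}$ on the same input simultaneously, letter by letter, while tracking just enough state information, and then to apply a pigeonhole argument to the sequence of reached \emph{configurations}. The observation that keeps the configuration space small is that, because the rightmost state of a sequence acts first, both $\bm{p}\bm{r}$ and $\bm{q}\bm{r}$ first feed the input through the \emph{same} suffix $\bm{r}$; hence the intermediate word $\bm{r} \circ w'$ produced after reading a prefix $w'$ is identical in the two computations, and we only need to remember the evolution of $\bm{r}$ once.

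Concretely, after reading a prefix $w'$ of the input I would record the configuration $(\bm{p} \cdot y', \bm{q} \cdot y', \bm{r} \cdot w')$, where $y' = \bm{r} \circ w'$ is the common intermediate word. By the dual-action identity $\bm{p}\bm{r} \cdot w' = (\bm{p} \cdot (\bm{r} \circ w'))(\bm{r} \cdot w')$ (obtained from $\bm{p}q \cdot u = (\bm{p} \cdot (q \circ u))(q \cdot u)$ by induction on $|\bm{r}|$), this configuration determines both $\bm{p}\bm{r} \cdot w'$ and $\bm{q}\bm{r} \cdot w'$. Reading one further letter $a$ updates the configuration deterministically: the $\bm{r}$-layer emits the single letter $c = (\bm{r}\cdot w') \circ a$ and moves to $\bm{r} \cdot w'a$, and this one letter $c$ is then fed into both the $\bm{p}$- and the $\bm{q}$-layer. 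The output letters of $\bm{p}\bm{r}$ and of $\bm{q}\bm{r}$ at this position are therefore $(\bm{p} \cdot y') \circ c$ and $(\bm{q} \cdot y') \circ c$; a difference at this position---two distinct output letters, or a missing transition in exactly one layer---is detectable purely from the configuration and $a$. Since the three components range over $Q^{|\bm{p}|}$, $Q^{|\bm{q}|}$ and $Q^{|\bm{r}|}$ respectively, there are at most $|Q|^{|\bm{p}| + |\bm{q}| + |\bm{r}|}$ configurations in total.

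Finally I would argue by minimality. Let $w = a_1 \dots a_t$ be a shortest word on which $\bm{p}\bm{r}$ and $\bm{q}\bm{r}$ act differently. Because definedness is closed under taking prefixes and the outputs are length-preserving, a disagreement (in an output letter or in definedness) already occurs at a single position; truncating the word at that position and invoking minimality lets me assume the disagreement happens exactly at the last letter $a_t$. Consequently the configurations $c_0, \dots, c_{t-1}$ reached before processing $a_t$ are all difference-free, and both functions remain defined along them. If $t$ exceeded $|Q|^{|\bm{p}|+|\bm{q}|+|\bm{r}|}$, two of these configurations would coincide, say $c_i = c_j$ with $i < j$; by determinism of the update, excising $a_{i+1} \dots a_j$ yields a strictly shorter word $a_1 \dots a_i a_{j+1} \dots a_t$ that reaches the very same $c_{t-1}$ just before $a_t$ and therefore still witnesses the difference---contradicting minimality. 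Hence $t \le |Q|^{|\bm{p}|+|\bm{q}|+|\bm{r}|}$, as claimed.

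The only genuine subtlety, and the step I would treat most carefully, is the sharing of the $\bm{r}$-layer: a naive product automaton tracking $\bm{p}\bm{r} \cdot w'$ and $\bm{q}\bm{r} \cdot w'$ as independent sequences would give the weaker exponent $|\bm{p}| + |\bm{q}| + 2|\bm{r}|$. Recognizing that both computations route the input through the identical intermediate word $\bm{r} \circ w'$, so that one shared copy of $\bm{r}$'s current state suffices, is exactly what yields the stated bound.
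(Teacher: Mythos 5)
Your proof is correct and takes essentially the same approach as the paper: the paper presents it as a nondeterministic guess-and-check algorithm whose state is exactly your configuration triple $(\bm{p} \cdot y',\, \bm{q} \cdot y',\, \bm{r} \cdot w')$ with a single shared copy of the $\bm{r}$-layer feeding one letter into both $\bm{p}$ and $\bm{q}$, followed by the same loop-excision (pigeonhole) argument on the $|Q|^{|\bm{p}|+|\bm{q}|+|\bm{r}|}$ possible configurations. Your concluding remark about the shared $\bm{r}$-component being what avoids the weaker exponent $|\bm{p}|+|\bm{q}|+2|\bm{r}|$ is precisely the point implicit in the paper's algorithm design.
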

    \begin{proof}
      The upper bound can be obtained in a similar way to \autoref{cor:upperBoundForExpandability} and \autoref{cor:upperBoundForExpandabilityGroup}: we give a non-deterministic algorithm for the problem
      \problem
        [an $S$-automaton $\mathcal{T} = (Q, \Sigma, \delta)$]
        {$\bm{p}, \bm{q}, \bm{r} \in Q^*$}
        {$\exists u \in \Sigma^*: \bm{p} \bm{r} \circ u \neq \bm{q} \bm{r} \circ u$?}
      \noindent{}and analyze the number of possible configurations. The algorithm is a quite straight-forward guess and check approach. We need three variables holding values from $Q^+$, which correspond to the three input values. We guess a witness $u \in \Sigma^*$ letter by letter. If we guess $a \in \Sigma$, we compute $b = \bm{r} \circ a$ and (simultaneously) update $\bm{r} \gets \bm{r} \cdot a$. Then, we check whether we have $\bm{p} \circ b \neq \bm{q} \circ b$. If this is true, then we have found a witness. Simultaneously to the check, we update $\bm{p} \gets \bm{p} \cdot b$ and $\bm{q} \gets \bm{q} \cdot b$.

      If, during the computation of a witness, all three variables $\bm{r}$, $\bm{p}$ and $\bm{q}$ have the same values as sometime before on the same computational branch, we can eliminate this computational loop and get a shorter witness. Notice that, during any computation, the variable $\bm{r}$ will always hold a value of the same length as its initial value and that the same is true for $\bm{p}$ and $\bm{q}$. The upper bound for a shortest witness stated in the lemma follows by calculating the number of different possible values for the triple of variables $(\bm{r}, \bm{p},\bm{q})$.
    \end{proof}

    The next straightforward lemma starts our discussion of reversible automata.
    \begin{lemma}\label{lem: reversibility on defining relations}
      Let $\mathcal{T} = (Q, \Sigma, \delta)$ be a complete and reversible $S$-automaton. Then, for every $\bm{p}, \bm{q} \in Q^+$ and every $u \in \Sigma^*$, we have $\bm{p} = \bm{q}$ in $\mathscr{S}(\mathcal{T})$ if and only if $\bm{p} \cdot u = \bm{q} \cdot u$ in $\mathscr{S}(\mathcal{T})$.
    \end{lemma}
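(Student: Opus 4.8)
The plan is to treat the two implications separately: the forward implication ($\bm{p} = \bm{q}$ in $\mathscr{S}(\mathcal{T})$ implies $\bm{p}\cdot u = \bm{q}\cdot u$ in $\mathscr{S}(\mathcal{T})$) is the easy one and uses only completeness, whereas the converse is where reversibility enters and constitutes the actual work. The single tool I would use throughout is the run-concatenation identity $\bm{r}\circ (uv) = (\bm{r}\circ u)\,\bigl((\bm{r}\cdot u)\circ v\bigr)$, valid for all $\bm{r}\in Q^*$ and $u,v\in\Sigma^*$; it follows from splitting a run for $uv$ at the position after $u$ and can be checked by a short induction on $|\bm{r}|$ from the definition of the dual action. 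Since $\mathcal{T}$ is complete, all functions $\bm{r}\circ{}\!$ and $\!{}\cdot u$ are total, so no definedness issues arise.

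For the forward direction I would assume $\bm{p}\circ{}\! = \bm{q}\circ{}\!$ and fix an arbitrary $v\in\Sigma^*$. Applying the identity to both $\bm{p}$ and $\bm{q}$ on $uv$ and using $\bm{p}\circ (uv) = \bm{q}\circ (uv)$ gives $(\bm{p}\circ u)\,\bigl((\bm{p}\cdot u)\circ v\bigr) = (\bm{q}\circ u)\,\bigl((\bm{q}\cdot u)\circ v\bigr)$. Because also $\bm{p}\circ u = \bm{q}\circ u$, the two length-$|u|$ prefixes coincide, and cancelling them yields $(\bm{p}\cdot u)\circ v = (\bm{q}\cdot u)\circ v$. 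As $v$ was arbitrary, $\bm{p}\cdot u = \bm{q}\cdot u$ in $\mathscr{S}(\mathcal{T})$.

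For the converse I would first record that, for a complete reversible automaton, every $\!{}\cdot c$ with $c\in\Sigma$ is a \emph{bijection} of $Q^*$: it is total (completeness), one-to-one (\autoref{fct:reversibleImpliesOneToOne}) and length-preserving, hence a bijection of each finite set $Q^n$ and therefore of $Q^*$. Arguing by contraposition, suppose $\bm{p}\circ{}\! \neq \bm{q}\circ{}\!$. Then $\bm{p}$ and $\bm{q}$ disagree on some input, and inspecting the first output position at which they differ yields a word $w\in\Sigma^*$ and a letter $c\in\Sigma$ with $(\bm{p}\cdot w)\circ c \neq (\bm{q}\cdot w)\circ c$. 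I would then move this discrepancy behind $u$: find $v\in\Sigma^*$ with $\bm{p}\cdot u\cdot v = \bm{p}\cdot w$ and $\bm{q}\cdot u\cdot v = \bm{q}\cdot w$; the identity then shows that $(\bm{p}\cdot u)\circ (vc)$ and $(\bm{q}\cdot u)\circ (vc)$ differ in their last letter, witnessing $(\bm{p}\cdot u)\circ{}\! \neq (\bm{q}\cdot u)\circ{}\!$.

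The existence of such a $v$ is the crux and is where reversibility pays off; I expect it to be the main obstacle, since a single $v$ must reproduce on \emph{both} $\bm{p}\cdot u$ and $\bm{q}\cdot u$ simultaneously the effect that $w$ has on $\bm{p}$ and $\bm{q}$, while the dual action on a concatenation does not split coordinatewise. I would resolve this by passing to the diagonal action $(\bm{p},\bm{q})\cdot z = (\bm{p}\cdot z,\bm{q}\cdot z)$ and its orbit $O$ of $(\bm{p},\bm{q})$. As $\bm{p},\bm{q}$ have fixed lengths and each $\!{}\cdot z$ is length-preserving, $O$ lies in the \emph{finite} set $Q^{|\bm{p}|}\times Q^{|\bm{q}|}$, and each letter acts on $O$ as the restriction of the bijection $\!{}\cdot c$, hence as a permutation of the finite set $O$. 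Thus the letters generate a finite group of permutations acting transitively on $O$ (it is a single orbit, and within a finite permutation group reachability is symmetric). Both $(\bm{p},\bm{q})\cdot u$ and $(\bm{p},\bm{q})\cdot w$ lie in $O$, so some group element maps the former to the latter; being a group element it is realised by a word $v$ (inverses are positive powers in a finite group), and this $v$ is exactly the one required. Without reversibility the maps $\!{}\cdot c$ need not be injective, $O$ need not carry a group action, and the converse indeed fails.
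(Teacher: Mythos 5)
Your proof is correct, but it takes a genuinely different route from the paper's. Both arguments rest on the same foundation---by completeness and \autoref{fct:reversibleImpliesOneToOne}, each $\!{}\cdot u$ is a total, length-preserving, one-to-one map, hence a permutation of each finite set $Q^n$---but they deploy it differently. The paper proves the converse \emph{directly}: since $\!{}\cdot u$ permutes $Q^{|\bm{p}|}$ and $Q^{|\bm{q}|}$, there are $k, \ell > 0$ with $\bm{p} \cdot u^k = \bm{p}$ and $\bm{q} \cdot u^\ell = \bm{q}$; taking $m = \operatorname{lcm}(k, \ell)$, it writes $\bm{p} = (\bm{p}\cdot u)\cdot u^{m-1}$ and $\bm{q} = (\bm{q}\cdot u)\cdot u^{m-1}$, so the hypothesis $\bm{p}\cdot u = \bm{q}\cdot u$ in $\mathscr{S}(\mathcal{T})$ propagates to $\bm{p} = \bm{q}$ by the (easy) forward implication---the hard direction is thus reduced to the easy one, with no witness ever constructed. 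You instead argue by contraposition and build an explicit distinguishing witness: a pair $(w, c)$ on which $\bm{p}$ and $\bm{q}$ visibly differ in output, transported behind $u$ by a word $v$ with $(\bm{p},\bm{q}) \cdot uv = (\bm{p},\bm{q}) \cdot w$, whose existence you get from transitivity of a finite permutation group on the diagonal orbit of $(\bm{p}, \bm{q})$. That transport step is sound, but the group-theoretic machinery is heavier than necessary: the paper's lcm trick applies verbatim to the diagonal action, giving $m > 0$ with $(\bm{p},\bm{q})\cdot u^m = (\bm{p},\bm{q})$, after which $v = u^{m-1}w$ works immediately. What your approach buys is constructiveness---it exhibits a concrete word $vc$ on which $(\bm{p}\cdot u) \circ{}\!$ and $(\bm{q}\cdot u) \circ{}\!$ differ, which is in the spirit of how the lemma is later exploited in \autoref{thm:upperBoundForExpandabilityReversibleComplete}; what the paper's proof buys is brevity, staying entirely at the level of semigroup equalities and reusing the forward direction as a lemma.
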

    \begin{proof}
      If $\bm{p} = \bm{q}$ in $\mathscr{S}(\mathcal{T})$, then it is clear that $\bm{p} \cdot u = \bm{q} \cdot u$ in $\mathscr{S}(\mathcal{T})$ also holds.

      For the converse, recall that, for every word $u \in \Sigma^*$, the mapping $\!{}\cdot u \colon Q^{+} \to Q^{+}, \bm{q} \mapsto \bm{q} \cdot u$ is a total, length-preserving bijection (by \autoref{fct:reversibleImpliesOneToOne}). Let $\bm{p}, \bm{q} \in Q^+$ such that $\bm{p} \cdot u = \bm{q} \cdot u$ in $\mathscr{S}(\mathcal{T})$. Because $\!{}\cdot u$ is length-preserving and bijective, there are $k, \ell > 0$ such that $\bm{p} \cdot u^k = \bm{p}$ and $\bm{q} \cdot u^\ell = \bm{q}$. Taking $m$ as the least common multiple of $k$ and $\ell$, we obtain
      \[
        \bm{p} = \left( \bm{p} \cdot u \right) \cdot u^{m - 1} \text{ and } \bm{q} = \left( \bm{q} \cdot u \right) \cdot u^{m - 1} \text{,}
      \]
      which implies that $\bm{p}$ and $\bm{q}$ are equal in $\mathscr{S}(\mathcal{T})$ by the same argument as for the direct direction.
    \end{proof}

    The observation from the previous lemma is sufficient to show that, in an infinite automaton semigroup generated by a complete and reversible automaton, every word is expandable.

    \begin{thm}\label{thm:upperBoundForExpandabilityReversibleComplete}
      Let $\mathcal{T} = (Q, \Sigma, \delta)$ be a complete and reversible $S$-automaton such that $\mathscr{S}(\mathcal{T})$ is infinite. Then, every word $u \in \Sigma^*$ is expandable by a word of length at most $|Q|^n$ where $n = \left| Q^* \circ u \right|$.
    \end{thm}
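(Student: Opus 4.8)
The plan is to reduce expandability to a fiber-splitting phenomenon in the orbital graph and then to extract a short witness by combining a pigeonhole argument on walks with the word-problem bound of \autoref{lem:wordProblemUpperBound}.

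First I would record the effect of completeness. Since $\mathcal{T}$ is complete, every $\bm{q} \circ{}\!$ is total, and mapping $\bm{q} \circ ux$ to its length-$|u|$ prefix $\bm{q} \circ u$ defines a surjection $Q^* \circ ux \twoheadrightarrow Q^* \circ u$; hence orbits never shrink and $u$ is expandable precisely when, for some $x$, one of the $n$ fibers of this surjection has at least two elements. The fiber over a node $w = \bm{a} \circ u = \bm{b} \circ u$ splits for $x$ exactly when $(\bm{a} \cdot u) \circ x \neq (\bm{b} \cdot u) \circ x$. By \autoref{lem: reversibility on defining relations}, $\bm{a} \cdot u = \bm{b} \cdot u$ in $\mathscr{S}(\mathcal{T})$ if and only if $\bm{a} = \bm{b}$ in $\mathscr{S}(\mathcal{T})$, so a splitting fiber is available for some $x$ exactly when there are two state sequences reaching a common node of $\mathcal{T} \circ u$ that represent different elements of $\mathscr{S}(\mathcal{T})$. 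If no such pair existed, the action on the single word $u$ would be faithful and $\mathscr{S}(\mathcal{T})$ would inject into the $n$-element orbit $Q^* \circ u$, contradicting infiniteness; this already re-proves expandability.

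To control the length, I would produce the two sequences from a geodesic. As $\mathscr{S}(\mathcal{T})$ is infinite while each ball of radius ${<}\,n$ is finite, there is an element whose shortest representative $\bm{q} = q_\ell \cdots q_1$ has length $\ell \geq n$. Tracking the walk $v^{(i)} = q_i \cdots q_1 \circ u$ in the orbital graph, the pigeonhole principle yields indices $0 \leq s < t \leq n$ with $v^{(s)} = v^{(t)} =: w$. Writing $\bm{s} = q_s \cdots q_1$ and $\bm{c} = q_t \cdots q_{s+1}$, both $\bm{s}$ and $\bm{c}\bm{s}$ send $u$ to $w$, and deleting the loop $\bm{c}$ shortens $\bm{q}$; since $\bm{q}$ is a geodesic, the shortened sequence must represent a different element, and because left multiplication cannot create an inequality, this forces $\bm{c}\bm{s} \neq \bm{s}$ in $\mathscr{S}(\mathcal{T})$ already. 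Thus $\bm{s}$ and $\bm{c}\bm{s}$ are the desired distinct sequences reaching $w$, with $|\bm{c}| + |\bm{s}| = t \leq n$.

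Finally I would convert this into a short witness $x$ using the common suffix. Iterating the recursive definition of the dual action gives $\bm{c}\bm{s} \cdot u = (\bm{c} \cdot w)(\bm{s} \cdot u)$, so the two shifted sequences are $(\bm{c} \cdot w)(\bm{s} \cdot u)$ and $(\bm{s} \cdot u)$ — a pair differing only by the left factor $\bm{c} \cdot w$ over the common right factor $\bm{s} \cdot u$. They represent different elements (again by \autoref{lem: reversibility on defining relations}, since $\bm{c}\bm{s} \neq \bm{s}$), so \autoref{lem:wordProblemUpperBound}, applied with $\bm{p} = \bm{c} \cdot w$, $\bm{q} = \varepsilon$ and $\bm{r} = \bm{s} \cdot u$, supplies a word $x$ separating them of length at most $|Q|^{|\bm{c}| + |\bm{s}|} = |Q|^{t} \leq |Q|^n$. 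For this $x$ the fiber over $w$ splits, so $|Q^* \circ ux| > n$ and $u$ is expandable by $x$. The main obstacle is exactly the bookkeeping that keeps the exponent at $n$ rather than $2n$: one must notice that the two witnessing sequences share the suffix $\bm{s}$, so that the shared part is counted only once when invoking \autoref{lem:wordProblemUpperBound} with $\bm{r} = \bm{s} \cdot u$; the supporting subtlety is the left-context cancellation (writing $\bm{t} = q_\ell \cdots q_{t+1}$, the inequality $\bm{t}\bm{c}\bm{s} \neq \bm{t}\bm{s}$ forced by geodesicity descends to $\bm{c}\bm{s} \neq \bm{s}$).
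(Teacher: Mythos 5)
Your proposal is correct, and although it ends in the same place as the paper's proof---the prefix surjection $Q^* \circ ux \twoheadrightarrow Q^* \circ u$ coming from completeness, \autoref{lem: reversibility on defining relations} to transfer an inequality through the dual action, and \autoref{lem:wordProblemUpperBound} applied to a pair sharing a suffix so that the shared part is counted only once in the exponent---your route to the short distinguishing pair is genuinely different. The paper starts from an \emph{arbitrary} pair $\bm{p}, \bm{q}$ with $\bm{p} \circ u = \bm{q} \circ u$ but $\bm{p} \neq \bm{q}$ in $\mathscr{S}(\mathcal{T})$ (pigeonhole on infinitely many elements versus finitely many images of $u$) and then shortens it by iterated surgery on the corresponding paths in $\mathcal{T} \circ u$: deleting cycles and splitting at the longest common prefix, with a case distinction at each step on whether the surgery changed the represented element and an iteration whenever it did not. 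You instead take a geodesic representative $\bm{q} = q_\ell \cdots q_1$, $\ell \geq n$, of an element outside the ball of radius $n-1$ (such an element exists because $\mathscr{S}(\mathcal{T})$ is infinite while balls are finite) and apply the pigeonhole principle to the walk $v^{(i)} = q_i \cdots q_1 \circ u$. Geodesicity guarantees \emph{outright} that deleting the cycle changes the element, and your (valid) remark that left multiplication cannot create an inequality localizes this to $\bm{c}\bm{s} \neq \bm{s}$; the very property the paper must carefully preserve through its surgery is thus obtained in one shot, which makes your argument shorter and removes the iteration.

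One degenerate case needs patching: when $s = 0$, your pair is $\bm{c}$ versus the empty sequence, and neither geodesicity nor \autoref{lem: reversibility on defining relations} applies literally, since $\varepsilon$ represents no element of $\mathscr{S}(\mathcal{T})$ (which need not contain an identity) and the lemma is stated for $\bm{p}, \bm{q} \in Q^+$. This is repairable: for $s = 0 < t < \ell$ your cancellation step still shows that $\bm{c} \circ{}\!$ is not the identity map on $\Sigma^*$, and the proof of \autoref{lem: reversibility on defining relations} carries over unchanged to show that then $\bm{c} \cdot u \circ{}\!$ is not the identity either, which is exactly what \autoref{lem:wordProblemUpperBound} needs; for the corner case $(s,t) = (0,\ell)$ you should additionally choose your far element to be a non-identity element, which is possible because infinitely many elements lie outside the ball. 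Since the paper's own surgery is equally casual about the analogous case of an empty common suffix, this is a footnote-level repair rather than a flaw in your approach.
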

    \begin{proof}
      By the pigeon hole principle, there must be two state sequences $\bm{p}, \bm{q} \in Q^+$ such that $\bm{p} \circ u = \bm{q} \circ u$ but $\bm{p} \neq \bm{q}$ in $\mathscr{S}(\mathcal{T})$ since $\mathscr{S}(\mathcal{T})$ contains infinitely many functions while the possible images of $u$ are a subset of $\Sigma^{|u|}$ and, thus, finite. By \autoref{lem: reversibility on defining relations}, this implies $\bm{p} \cdot u \neq \bm{q} \cdot u$ in $\mathscr{S}(\mathcal{T})$. Thus, there is a word $x \in \Sigma^*$ such that $\bm{p} \cdot u \circ x \neq \bm{q} \cdot u \circ x$. This means $\bm{p} \circ ux \neq \bm{q} \circ ux$. As $\mathcal{T}$ is complete, the mapping $Q^* \circ ux \to Q^* \circ u, \bm{q} \circ ux \mapsto \bm{q} \circ u$ is surjective. The two distinct elements $\bm{p} \circ ux$ and $\bm{q} \circ ux$ map to the same element $\bm{p} \circ u = \bm{q} \circ u$, however; thus, it is not injective. Therefore, we have $\left| Q^* \circ ux \right| > \left| Q^* \circ u \right|$ or, in other words, that $u$ is expandable (by $x$).

      To prove the stated upper bound on the length, we consider $\bm{p}$ and $\bm{q}$ as labeled paths in the orbital graph\footnote{Remember that, because $\circ$ is a \emph{left} action, the last \emph{letter} of $\bm{q}$ is the label of the \emph{first} edge.} $\mathcal{T} \circ u$. Suppose that $\bm{\ell}$ is a cycle on the path $\bm{q} = \bm{q}_2 \bm{\ell} \bm{q}_1$:
      \begin{center}
        \begin{tikzpicture}[auto, shorten >=1pt, >=latex,
          vertex/.style={circle, fill=black, draw=none, inner sep=1pt}]

          \node[vertex, label=above:$u$] (u) {};
          \node[right=of u] (dl) {$\dots$};
          \node[vertex, right=of dl, label={[align=right]below:$\begin{aligned}
              \bm{q}_1 &\circ u\\[-0.25\baselineskip]
              &{}\mathrel{\rotatebox[origin=c]{-90}{{}={}}}{}\\[-0.25\baselineskip]
              \bm{\ell} \bm{q}_1 &\circ u
            \end{aligned}$}] (c) {};
          \node[right=of c] (dr) {$\dots$};
          \node[vertex, right=of dr, label=above:$\bm{q} \circ u$] (qu) {};
          \node[above=of c] (dt) {$\dots$};

          \node[above=0cm of dl] {$\bm{q}_1$};
          \node[below=0cm of dt] {$\bm{\ell}$};
          \node[above=0cm of dr] {$\bm{q}_2$};

          \path[->] (u) edge (dl)
                    (dl) edge (c)
                    (c) edge (dr)
                    (dr) edge (qu)
                    (c) edge[bend right] (dt.south east)
                    (dt.south west) edge[bend right] (c);
        \end{tikzpicture}
      \end{center}
      First, suppose that we have $\bm{q}_2 \bm{q}_1 \neq \bm{q} = \bm{q}_2 \bm{\ell} \bm{q}_1$ in $\mathscr{S}(\mathcal{T})$. This is only possible if $\bm{q}_1 \neq \bm{\ell} \bm{q}_1$ in $\mathscr{S}(\mathcal{T})$. Since we still have $\bm{q}_1 \circ u = \bm{\ell \bm{q}}_1 \circ u$, we can replace $\bm{q}$ by $\bm{q}_1$ and $\bm{p}$ by $\bm{\ell} \bm{q}_1$. We may assume $\bm{\ell}$ to be the first cycle on the path and to be of minimal length. Thus, the only node shared between the two paths is $\bm{q}_1 \circ u = \bm{\ell \bm{q}}_1 \circ u$, which yields $|\bm{q}_1| + 1 + |\bm{\ell}| - 1 \leq n$. Using \autoref{lem:wordProblemUpperBound}, this yields the upper bound of $|Q|^n$ on the length of the witness $x$ since $\bm{q}$ and $\bm{q} \cdot u$ are of the same length.

      If we have $\bm{q}_2 \bm{q}_1 = \bm{q}$ in $\mathscr{S}(\mathcal{T})$, then we can remove the cycle without changing the semigroup element (i.\,e., we still have $\bm{q}_2 \bm{q}_1 \neq \bm{p}$ in $\mathscr{S}(\mathcal{T})$ but $\bm{q}_2 \bm{q}_1 \circ u = \bm{p} \circ u$). Iterating this argument, we may assume $\bm{p}$ and $\bm{q}$ not to contain cycles. They may still have nodes in common, however. Let $\bm{p_1}$ be their longest common prefix (seen as paths), i.\,e., they part afterwards and join again later (because they both end in the same node):
      \begin{center}
        \begin{tikzpicture}[auto, shorten >=1pt, >=latex,
          vertex/.style={circle, fill=black, draw=none, inner sep=1pt}]

          \node[vertex, label=above:$u$] (u) {};
          \node[right=of u] (dl) {$\dots$};
          \node[vertex, right=of dl] (s) {};
          \node[above right=of s] (dct) {$\dots$};
          \node[below right=of s] (dcb) {$\dots$};
          \node[vertex, below right=of dct] (j) {};
          \node[above right=of j] (drt) {$\dots$};
          \node[below right=of j] (drb) {$\dots$};
          \node[vertex, below right=of drt, label=left:$\bm{q} \circ u$] (qu) {};

          \node[above=0cm of dl] {$\bm{p}_1$};
          \node[below=0cm of dct] {$\bm{p}_2$};
          \node[above=0cm of dcb] {$\bm{q}_2$};
          \node[below=0cm of drt] {$\bm{p}_3$};
          \node[above=0cm of drb] {$\bm{q}_3$};

          \path[->] (u) edge (dl)
                    (dl) edge (s)

                    (s) edge[bend left] (dct.west)
                    (dct.east) edge[bend left] (j)
                    (s) edge[bend right] (dcb.west)
                    (dcb.east) edge[bend right] (j)

                    (j) edge[bend left] (drt.west)
                    (drt.east) edge[bend left] (qu)
                    (j) edge[bend right] (drb.west)
                    (drb.east) edge[bend right] (qu);

          \draw[decorate,decoration={brace, raise=1mm}]
            (drt.north east -| qu.north) -- node[xshift=1mm] {$\bm{p}$} (qu.north);
          \draw[decorate,decoration={brace, mirror, raise=1mm}]
            (drb.south east -| qu.south) -- node[right, xshift=1mm] {$\bm{q}$} (qu.south);
        \end{tikzpicture}
      \end{center}
      By choosing the paths $\bm{p}_2$ and $\bm{q}_2$ of minimal length, we may assume that they do not have any common nodes (except for their first and last node, of course). The first case is that $\bm{p}_2 \bm{p}_1 \neq \bm{q}_2 \bm{p}_1$ in $\mathscr{S}(\mathcal{T})$. As we still have $\bm{p}_2 \bm{p}_1 \circ u = \bm{q}_2 \bm{p}_1 \circ u$, we can replace $\bm{p}$ by $\bm{p}_2 \bm{p}_1$ and $\bm{q}$ by $\bm{q}_2 \bm{p}_1$. Notice that, on both paths, there are $|\bm{p}_1| + 1 + |\bm{p}_2| + |\bm{q}_2| - 1 \leq n$ distinct nodes. Thus, \autoref{lem:wordProblemUpperBound} yields a witness $x$ of length at most $|Q|^n$. If, on the other hand, we have $\bm{p}_2 \bm{p}_1 = \bm{q}_2 \bm{p}_1$ in $\mathscr{S}(\mathcal{T})$, we can replace $\bm{q}$ by $\bm{q}_3 \bm{p}_2 \bm{p}_1$ and iterate the argument.
    \end{proof}
  \end{section}
  
  \begin{section}{Lower Bounds}\label{sec:lowerBound}
      Besides the relatively large upper bound of about $|Q|^{n^2}$ in the semigroup case for the length of an expanding word given in \autoref{cor:upperBoundForExpandability} and the only slightly lower upper bounds of $(2 \, |Q|)^{2n}$ and $|Q|^n$ given in \autoref{cor:upperBoundForExpandabilityGroup} for the group case and in \autoref{thm:upperBoundForExpandabilityReversibleComplete} for the complete reversible case, it seems to be rather difficult to obtain constructions for lower bounds. The following rather straightforward modification of the adding machine yields a constant lower bound (where the constant is approximately the number of states).
      \begin{prop}\label{prop:lowerBoundForExpandabilityGroupCase}
        Let $\ell > 0$ and $\mathcal{T} = (Q, \Sigma, \delta)$ be the $G$-automaton
        \begin{center}
          \begin{tikzpicture}[auto, shorten >=1pt, >=latex, baseline=(id.base)]
            \node[state] (p) {$p$};
            \node[state, right=of p] (q1) {$q_1$};
            \node[right=of q1] (dots) {$\dots$};
            \node[state, right=of dots] (ql-1) {$q_{\ell - 1}$};
            \node[state, right=of ql-1] (ql) {$q_\ell$};
            \node[state, right=of ql] (id) {$\id$};
            
            \draw[->] (p) edge node[swap] {$\blank / \blank$} (q1)
                      (q1) edge node[swap] {$\blank / \blank$} (dots)
                      (dots) edge node[swap] {$\blank / \blank$} (ql-1)
                      (ql-1) edge node[swap] {$\blank / \blank$} (ql)
                      (ql) edge[bend right] node[swap] {$1 / 0$} (p)
                      (ql) edge node {$0 / 1$} (id)
                      (id) edge[loop right] node[align=center] {$0/0$\\$1/1$\\$\blank / \blank$} (id)
            ;
            \draw[->] (p) edge[loop left] node[align=center] {$0/0$\\$1/1$} (p)
                      (q1) edge[loop below] node[align=center] {$0/0$\\$1/1$} (q1)
                      (ql-1) edge[loop below] node[align=center] {$0/0$\\$1/1$} (ql-1)
                      (ql) edge[loop above] node {$\blank / \blank$} (ql)
            ;
          \end{tikzpicture}.
        \end{center}
        Then, for every $n \geq 1$, the word $\left( \blank^\ell 0 \right)^n$ is only expandable by words $x \in \Sigma^*$ of length $|x| \geq \ell + 1 = |Q| - 1$.
      \end{prop}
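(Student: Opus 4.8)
The plan is to recognise $\mathcal{T}$ as a version of the adding machine in which the binary digits are separated by blocks $\blank^\ell$, and to show that appending \emph{any} word of length at most $\ell$ leaves the orbit size unchanged; since appending the length-$(\ell+1)$ word $\blank^\ell 0$ manifestly does increase it (turning $(\blank^\ell 0)^n$ into $(\blank^\ell 0)^{n+1}$), this yields exactly the threshold $\ell+1=|Q|-1$. First I would record the shape-preserving property of $\mathcal{T}$: every transition reading $\blank$ outputs $\blank$ and every transition reading a digit outputs a digit, so the orbit $Q^*\circ u$ of $u=(\blank^\ell 0)^n$ consists precisely of the words $\blank^\ell d_1\cdots\blank^\ell d_n$ with $d_i\in\{0,1\}$, and $\wt{Q}^*\circ u$ equals this same set. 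This gives $u$ a fixed block structure that will be exploited repeatedly.

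The key local observation is that the only transitions altering a letter are the two at $q_\ell$ reading a digit (and, in $\inverse{\mathcal{T}}$, the two at $\inverse{q_\ell}$), and that to reach $q_\ell$ starting from $p$ one must first consume $\ell$ blanks, since digits only loop without advancing the counter $p\to q_1\to\cdots\to q_\ell$; symmetrically for $\inverse{q_\ell}$ from $\inverse{p}$. Consequently each of the four states $\id,\inverse{\id},p,\inverse{p}$ acts as the identity on every word of length at most $\ell$: a run of length $\le\ell$ out of $p$ (or $\inverse{p}$) never reaches the digit-flipping transition, while $\id,\inverse{\id}$ never flip anything. I would isolate this as a short sublemma.

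The heart of the argument is then to show that, for every $\bm{q}\in\wt{Q}^*$, the shifted state sequence $\bm{q}\cdot u$ uses only states from $\{\id,\inverse{\id},p,\inverse{p}\}$. By the recursive definition of $\!{}\cdot u$, each state occurring in $\bm{q}\cdot u$ has the form $t\cdot w$ where $t\in\wt{Q}$ is a single (inverse-)generator and $w$ is an intermediate output word lying in the orbit of $u$, hence of the shape $\blank^\ell d_1\cdots\blank^\ell d_n$ with $\ell$ leading blanks. A direct trace of $t\cdot w$ finishes this step: reading the leading $\blank^\ell$ drives $p$ (or any $q_i$) into $q_\ell$, after which the machine relaxes into $\id$ at the first digit $0$, and exits in $p$ only via carry-out when all digits are $1$; the inverse generators behave dually, landing in $\inverse{\id}$ or exiting in $\inverse{p}$ via borrow-out. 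Either way $t\cdot w\in\{\id,\inverse{\id},p,\inverse{p}\}$. Combining this with the sublemma, each factor of $\bm{q}\cdot u$ fixes any word of length $\le\ell$, so reading $x$ through these factors one after another gives $\bm{q}\cdot u\circ x = x$ for all $x$ with $|x|\le\ell$.

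To conclude, fix $x$ with $|x|\le\ell$. For complete automata the prefix map $Q^*\circ ux\to Q^*\circ u$, $\bm{q}\circ ux\mapsto\bm{q}\circ u$, is well defined and surjective, and the previous step shows it is injective: if $\bm{q}\circ u=\bm{q}'\circ u$ then $\bm{q}\circ ux=(\bm{q}\circ u)(\bm{q}\cdot u\circ x)=(\bm{q}'\circ u)(\bm{q}'\cdot u\circ x)=\bm{q}'\circ ux$, since both $\bm{q}\cdot u$ and $\bm{q}'\cdot u$ fix $x$. Hence $|Q^*\circ ux|=|Q^*\circ u|$, so no word of length at most $\ell$ expands $u$, forcing $|x|\ge\ell+1=|Q|-1$; alternatively one may invoke the factorisation $|Q^*\circ ux|=|Q^*\circ u|\cdot|\Stab_{\mathcal{T}\sqcup\inverse{\mathcal{T}}}(u)\cdot u\circ x|$ from \autoref{lem:expandedOrbitIsMultiple} and observe that the second factor is $1$. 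I expect the main obstacle to be precisely the shifted-state computation of the third paragraph: one must verify that the intermediate words genuinely retain the $\blank^\ell$-separated shape—so that the counter is always reset by $\ell$ leading blanks, keeping the shifted states out of the dangerous set $\{q_1,\dots,q_\ell\}$—and one must correctly handle the carry-out and borrow-out cases in which the exit state is $p$ or $\inverse{p}$ rather than $\id$ or $\inverse{\id}$.
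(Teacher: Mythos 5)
Your proof is correct and follows essentially the same route as the paper's: you show that every shifted state sequence $\bm{q} \cdot u$ consists only of states in $\{p, \id\}$ (you also track $\inverse{p}, \inverse{\id}$) and that these states act as the identity on all words of length at most $\ell$, so appending such a word cannot enlarge the orbit. The only real difference is that you carry the inverse states $\wt{Q}$ through the whole argument, which is harmless but unnecessary, since expandability is defined via the orbits $Q^* \circ u$ and $Q^* \circ ux$ under the non-inverse states alone — exactly the restriction the paper's (much terser) proof exploits.
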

      \begin{proof}
        First, note that the shortest words on which the actions of $p$ and $\id$ differ are $\blank^\ell 0$ and $\blank^\ell 1$, which are of length $\ell + 1$. As a consequence, we have $\bm{q} \circ u = \bm{p} \circ u$ for all $\bm{q}, \bm{p} \in \{ p, \id \}^*$ if $|u| < \ell + 1$.

        Next, note that, for all $q \in Q$, we have $q \cdot \blank^\ell z \in \{ p, \id \}$ for $z \in \{ 0, 1 \}$ and that the orbit of $\left( \blank^\ell 0 \right)^n$ is $Q^* \circ \left( \blank^\ell 0 \right)^n = \left( \blank^\ell \{ 0, 1 \} \right)^n$. Also, note that, in fact, $Q^* \cdot \left( \blank^\ell 0 \right)^n = \{ p, \id \}^*$, which concludes the argument.
      \end{proof}

      Obviously, there is a huge gap between the above lower bound construction and the upper bounds, which leads to the following open problem.
      \begin{prob}
        Let $\mathcal{T} = (Q, \Sigma, \delta)$ be an $S$-/$\inverse{S}$-/$G$-automaton such that $u \in \Sigma^*$ is expandable and let $x$ be a shortest word $x \in \Sigma^*$ for which $|Q^* \circ ux|$ is larger than $|Q^* \circ u|$. Are there better upper bounds for $|x|$ than the ones stated in \autoref{cor:upperBoundForExpandability}, \autoref{cor:upperBoundForExpandabilityGroup} and \autoref{thm:upperBoundForExpandabilityReversibleComplete}?

        On the other hand, are there better lower bounds $b(n)$ than the one given by \autoref{prop:lowerBoundForExpandabilityGroupCase} such that there are $S$-/$\inverse{S}$-/$G$-automata $\mathcal{T}_n = (Q_n, \Sigma_n, \delta_n)$ and words $u_n \in \Sigma_n^n$ such that $u_n$ is not expandable by any word of length smaller than $b(n)$?
      \end{prob}
    \end{section}
    
\newpage
\bibliographystyle{plain}
\bibliography{references}

\end{document}